\newcommand{\remove}[1]{}
\def\comment#1{}
\def\ITEMMACRO #1 ??? #2 ???{\par\vskip2pt\noindent%
\hangindent=#2em\setbox0\hbox{#1\kern4pt}%
\ifdim\wd0<\hangindent\setbox0\hbox to\hangindent{\hss#1\kern.66em}\fi%
\box0\ignorespaces}
\begin{document}

\title{A Universal Point Set for 2-Outerplanar Graphs
}

\author{
  Patrizio~Angelini\inst{1}
  \and
  Till~Bruckdorfer\inst{1}
  \and 
  Michael~Kaufmann\inst{1}
  \and
  Tamara~Mchedlidze\inst{2}
}

\institute{
  Wilhelm-Schickard-Institut f\"ur Informatik, Universit\"at
  T\"ubingen, Germany
  \and
  Institute of Theoretical Informatics, Karlsruhe Institute of Technology,
  Germany
}

\maketitle

\begin{abstract}
A point set $S \subseteq \mathbb{R}^2$ is universal for a class $\cal G$ 
if every graph of ${\cal G}$ has a planar straight-line
embedding on $S$. 
It is well-known that the integer grid is a quadratic-size universal point set 
for planar graphs, while the existence of a sub-quadratic universal point set for them
is one of the most fascinating open problems in Graph Drawing.  
Motivated by the fact that outerplanarity is a key property for the existence of small universal point sets, 
we study $2$-outerplanar graphs and provide for them a universal point set of size 
$O(n \log n)$.
\end{abstract}

\section{Introduction}
\label{ups:introduction}

Let $S$ be a set of $m$ points on the plane. A \emph{planar straight-line 
	embedding} of 
an $n$-vertex planar graph $G$, with $n\leq m$, on $S$ is a mapping of 
each vertex of $G$ to a distinct point of $S$ so that, if the edges are drawn straight-line, no two edges cross. Point set $S$ is
\emph{universal} for a class $\cal G$ of graphs if every graph $G \in {\cal G}$ has a planar straight-line embedding on $S$. Asymptotically, the smallest universal point set for general planar graphs
is known to have size at least $1.235n$~\cite{Kurowski04},
while the upper bound is $O(n^2)$~\cite{DBLP:journals/jgaa/BannisterCDE14,FraysseixPP90,Schnyder90}. 
All the upper bounds are based on drawing the graphs on an integer grid, except for the one by Bannister et al.~\cite{DBLP:journals/jgaa/BannisterCDE14}, who use super-patterns to obtain a universal point set of size $n^2/4-\Theta(n)$ -- currently the best result for planar graphs. Closing the gap between the lower and the upper bounds is a challenging open problem~\cite{Cabello06,DBLP:conf/stoc/FraysseixPP88,FraysseixPP90}. 

A subclass of planar graphs for which the ``smallest possible'' universal point set
is known is the class of \emph{outerplanar} graphs -- the graphs that
admit a straight-line planar drawing in which all vertices are incident to the
outer face. Namely, Gritzmann et al.~\cite{GritzmannMPP91} and
Bose~\cite{comgeo/Bose02} proved that any point set of size $n$ in general position is
universal for $n$-vertex outerplanar graphs. 
Motivated by this result, we consider the class of \emph{$k$-outerplanar} 
graphs, with $k \geq 2$, 
which is a generalization of outerplanar graphs. A planar drawing of a graph is 
$k$-outerplanar 
if removing the vertices of the outer face, called $k$-th \emph{level}, produces a 
$(k-1)$-outerplanar drawing, where $1$-outerplanar stands for outerplanar. 
A graph is $k$-outerplanar if it admits a $k$-outerplanar drawing.
Note that every planar graph is a $k$-outerplanar graph, for some value of $k 
\in O(n)$. 
Hence, in order to tackle a meaningful subproblem of the general one, it makes 
sense to 
study the existence of subquadratic universal point sets when the value of $k$ 
is bounded 
by a constant or by a sublinear function. However, while the case $k=1$ is 
trivially solved 
by selecting any $n$ points in general position, as observed 
above~\cite{comgeo/Bose02,GritzmannMPP91}, 
the case $k=2$ already eluded several attempts of solution and turned out to be far from 
trivial. 
In this paper, we finally solve the case $k=2$ by providing a universal point 
set for $2$-outerplanar graphs of size $O(n \log n)$. 

A subclass of $k$-outerplanar graphs, in which  the value 
of $k$ is unbounded, but every level is restricted to be a chordless simple 
cycle, was known to have a universal point set of size $O(n (\frac{\log n}{\log\log 
	n})^2)$~\cite{AngeliniBKMRS11}, which was subsequently reduced to $O(n 
\log n)$~\cite{DBLP:journals/jgaa/BannisterCDE14}. 
It is also known that \emph{planar 3-trees} -- graphs not defined in terms of 
$k$-outerplanarity -- have a universal point set of size $O(n^{5/3})$~\cite{DBLP:conf/wads/FulekT13}. Note that planar $3$-trees have treewidth equal to $3$, while $2$-outerplanar graphs have treewidth at most $5$.

\noindent 
{\bf Structure of the paper:} 
After some preliminaries and definitions in Section~\ref{section:preliminaries},  
we consider $2$-outerplanar graphs in Section~\ref{section:triangulatedForests} where the inner level 
is a forest and all the internal faces are triangles. We prove that this class of graphs admits a 
universal point set of size $O(n^{3/2})$. 
We then extend the result in Section~\ref{section:Forests} to $2$-outerplanar graphs in which the inner 
level is still a forest but the faces are allowed to have larger size.
Finally, in Section~\ref{section:contraction}, we outline how the result of Section~\ref{section:Forests} can 
be extended to general $2$-outerplanar graphs. 
We also explain how to apply the methods by Bannister et al. in~\cite{DBLP:journals/jgaa/BannisterCDE14} 
to reduce the size of the point set to $O(n \log n)$.
We conclude with open problems in Section~\ref{section:conclusions}.

\section{Preliminaries and Definitions}
\label{section:preliminaries}

In this section we introduce basic terminology used throughout the paper. 
A straight-line segment with endpoints 
$p$ and $q$ is denoted by $s(pq)$. A circular arc with endpoints $p$ and $q$ 
(clockwise) is denoted by $a(pq)$. We assume familiarity with the concepts of \emph{planar graphs}, 
\emph{straight-line planar drawings}, and their \emph{faces}. 
A straight-line planar drawing $\Gamma$ of a graph $G$ determines a clockwise ordering 
of the edges incident to each vertex $u$ of $G$, called \emph{rotation at} 
$u$. The \emph{rotation scheme} of $G$ in $\Gamma$ is the set of the rotations 
at all the vertices of $G$ determined by $\Gamma$. Observe that, if $G$ is 
connected, in all the straight-line planar drawings of $G$ determining the same 
rotation scheme, the faces of the drawing are delimited by the same edges.

Let $[G,{\cal H}]$ be a $2$-outerplanar graph, where the outer level 
is an outerplanar graph $G$ and the inner level is a set ${\cal H} = 
\{G_1,\dots,G_k\}$ of outerplanar graphs. We assume that $[G,{\cal H}]$ is given 
together with a rotation scheme, and the goal is to construct a planar 
straight-line embedding of $[G,{\cal H}]$ on a point set determining this 
rotation scheme. Since $[G,{\cal H}]$ can be assumed to be connected (as 
otherwise we can add a minimal set of dummy edges to make it connected), this is 
equivalent to assuming that a straight-line planar drawing $\Gamma$ of $[G,{\cal 
	H}]$ is given. We rename the faces of $\Gamma$ as $F_1,\dots,F_k$ in such a 
way that each graph $G_h$, which can also be assumed connected, lies inside face 
$F_h$. Note that, for each face $F_h$ of $G$, the graph $[F_h,G_h]$ is again a 
$2$-outerplanar graph; however, in contrast to $[G,{\cal H}]$, its outer level 
$F_h$ is a simple chordless cycle and its inner level $G_h$ consists of only one 
connected component. In the special case in which $G_h$ is a tree we say that 
graph $[F_h,G_h]$ is a \emph{cycle-tree} graph.  We say that a $2$-outerplanar graph is \emph{inner-triangulated} if all the internal faces are $3$-cycles. Note that not every cycle-tree graph can be augmented to be inner-triangulated without introducing multiple edges.

\section{Inner-Triangulated $2$-Outerplanar Graphs with Forest}
\label{section:triangulatedForests}

In this section we prove that there exists a universal point set $S$ of size $O(n^{3/2})$ for the class of $n$-vertex inner-triangulated $2$-outerplanar graphs $[G,{\cal H}]$ where ${\cal H}$ is a forest. 
%

\subsection{Construction of the Universal Point Set}
\label{subsection:Construction}

In the following we describe $S$; refer to Fig.~\ref{fig:universalpointset}.
Let $\pi$ be a half circle with center $O$ and let $N:=n+\sqrt{n}$. 
Uniformly distribute points in $S_{\cal M}=\{p_1,\dots,p_N\}$ on $\pi$. The points in
$S_{\cal D}=\{p_{i\sqrt{n}+i}: 1\leq i \leq \sqrt{n} \} $ are called \emph{dense}, 
while the remaining points in $S_{\cal M} \setminus S_{\cal D}$ are  \emph{sparse}\footnote{The distribution of the points into dense and sparse portions of the point set is inspired by~\cite{AngeliniBKMRS11}.}.

\begin{figure}
	\centering
	\includegraphics[width=0.55\textwidth]{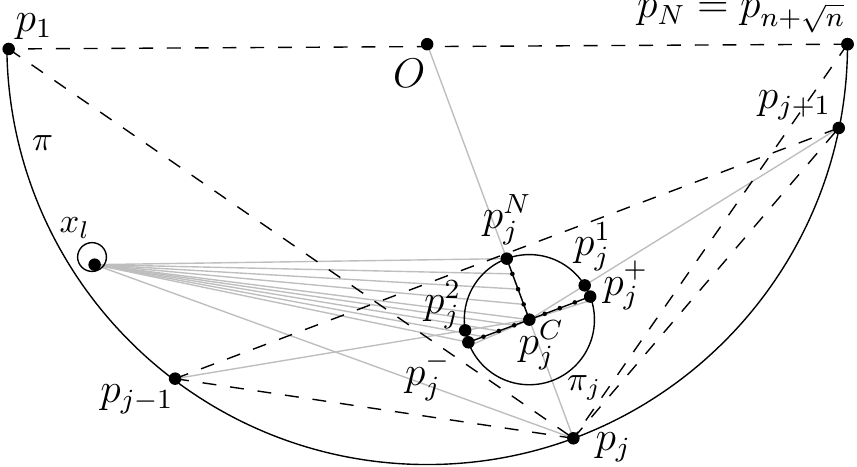}
	\caption{Illustration of $S$, focused on $S_j$ of $p_j$.}
	\label{fig:universalpointset}
\end{figure}

For $j = 2, \dots, N-1$, place a circle 
$\pi_j$ with its center $p_j^C$ on $s(p_jO)$, so that it lies completely inside 
the triangle $\triangle p_{j-1}p_jp_{j+1}$ and inside the triangle $\triangle 
p_1p_jp_N$. Note that the angles $\angle p_jp_j^Cp_N$ and $\angle 
p_jp_j^Cp_1$ are smaller than $180^\circ$. Let $p_j^N$ be the intersection point 
between $s(p_jO)$ and $\pi_j$ that is closer to $O$. 
Also, let $p_j^1$ (resp. $p_j^2$) be the intersection point of $s(p_j^Cp_{j+1})$ 
(resp. $s(p_j^Cp_{j-1})$) with $\pi_j$. Finally, let $p_j^3$ (resp. $p_j^4$) 
be the intersection point of $\pi_j$ with its diameter orthogonal to 
$s(p_jO)$, such that $a(p_j^3p_j^4)$ does not contain $p_j^N$.
Now, choose a point $p_j^+$ on the arc $a(p_j^1p_j^3)$, and a point $p_j^-$ on 
the arc $a(p_j^4p_j^2)$. To complete the construction of $S$, evenly distribute 
$\overline{n}-1$ points on each of the three segments $s^N_j := s(p_j^Cp_j^N)$, 
$s^+_j := s(p_j^Cp_j^+)$, and $s^-_j := s(p_j^Cp_j^-)$, where $\overline{n} = 
n$ if $p_j$ is dense and $\overline{n} = \sqrt{n}$ if it is sparse. We refer to 
the points on $s^N, s^+, s^-$, including the points $p_j^N, p_j^C, p_j^+, 
p_j^-$, as \emph{the point set of} $p_j$, and we denote it by $S_j$. Vertex 
$p_j^C$ is the \emph{center vertex of} $S_j$.

The described construction uses $(\sqrt{n}-1)(3n+1) + (n-1)(3\sqrt{n}+1)$=$O(n^{3/2})$ 
points and ensures the following property.

\begin{property}\label{property:visibility} $ $
	For each $j = 1, \dots, N$, the following visibility properties hold:\\
	\begin{enumerate}[(A)]
		\item 
		The straight-line segments connecting point $p_j$ to: point $p_j^-$, to the 
		points on $s^-_j$, to $p_j^C$, to the points on $s^+_j$, and to $p_j^+$ appear 
		in this clockwise order around $p_j$.
		\item For all $l < j$, consider any point $x_l \in \{p_l\} \cup S_l$ (see 
		Fig.~\ref{fig:universalpointset}); then, the straight-line segments connecting 
		$x_l$ to: $p_j^N$, to the points on $s^N_j$, to $p_j^C$, to the points on 
		$s^-_j$, to $p_j^-$, and to $p_j$ appear in this clockwise order around $x_l$. 
		Also, consider the line passing through $x_l$ and any point in $\{p_j\} \cup 
		S_j$; then, every point in $\{p_q\} \cup S_q$, with $l < q < j$, lies in the 
		half-plane delimited by this line that does not contain the center point $O$ of 
		$\pi$.
		\item For all $l > j$, consider any point $x_l \in \{p_l\} \cup S_l$; then, the 
		straight-line segments connecting $x_l$ to: $p_j^N$, to the points on $s^N_j$, to 
		$p_j^C$, to the points on $s^+_j$, to $p_j^+$, and to $p_j$ appear in this 
		counterclockwise order around $x_l$. Also, consider the line passing through 
		$x_l$ and any point in $\{p_j\} \cup S_j$; then, every point in $\{p_q\} \cup 
		S_q$, with $j < q < l$, lies in the half-plane delimited by this line that does 
		not contain $O$.
	\end{enumerate}
\end{property}
\begin{proof}
	Item (A) follows from the fact that $p_j^-$ and $p_j^+$ lie on different sides 
	of segment $s(p_jO)$. In order to prove item (B), consider the intersection 
	point $p_x$ between $\pi_j$ and segment $s(p_j^Cx_l)$; then, the first statement 
	of item (B) follows from the fact that points $p_j^-$, $p_x$, and $p_j^N$ appear 
	in this clockwise order along $\pi_j$. This is true since, by the construction 
	of $S$, point $p_x$ lies between $p_j^2$ and $p_j^N$, and point $p_j^-$ precedes 
	$p_j^2$ in this clockwise order.
	As for the second statement, this depends on the fact that each point set $S_q$, 
	with $l < q < j$, is entirely contained inside triangle $\triangle 
	p_{q-1},p_q,p_{q+1}$. The proof for item (C) is symmetrical to the one for item 
	(B).
	\qed
\end{proof}	

\subsection{Labeling the Graph}
\label{subsection:Labeling}

Let $[G,{\cal H}]$ be an inner-triangulated $2$-outerplanar graph where $G$ is 
an outerplanar graph and ${\cal H}=\{T_1,\dots,T_k\}$ is a forest such that tree 
$T_h$ lies inside face $F_h$ of $G$, for each $1 \leq h \leq k$. 
The idea behind the labeling is the following: in our embedding strategy, $G$ 
will be embedded on the half-circle $\pi$ of the point set $S$, while the 
tree $T_h \in {\cal H}$ lying inside each face $F_h$ of $G$ will be embedded on 
the point sets $S_j$ of some of the points $p_j$ on which vertices of $F_h$ are 
placed. Note that, since $\pi$ is a half-circle, the drawing of $F_h$ will 
always be a convex polygon in which two vertices have \emph{small} (acute) 
internal angles, while all the other vertices have \emph{large} (obtuse) 
internal angles. In particular, the vertices with the small angle are the first 
and the last vertices of $F_h$ in the order in which they appear along the outer face of 
$\Gamma$. Since, by construction, a point $p_j$ of $F_h$ has its point set $S_j$ 
in the interior of $F_h$ if and only if it has a large angle, we aim at 
assigning each vertex of $T_h$ to a vertex of $F_h$ that is neither the first 
nor the last. We will describe this 
assignment by means of a labeling $\ell \colon [G,{\cal H}] \rightarrow 1, \dots, |G|$; namely, 
we will assign a distinct label $\ell(v)$ to each vertex $v \in G$ and then 
assign to each vertex of $T_h$ the same label as one of the vertices of $F_h$ 
that is neither the first or the last. Then, the number of vertices with the same label as a vertex of $G$ will determine whether this vertex 
will be placed on a sparse or a dense point. We formalize this idea in the following.

We rename the vertices of $G$ as $v_1, \dots, 
v_{|G|}$ in the order in which they appear along the outer face of $\Gamma$, and label 
them with $\ell(v_i)=i$ for $i=1,\dots,|G|$. Next, we label the vertices 
of each tree $T_h \in {\cal H}$. Since trees $T_h$ and $T_{h'}$ are disjoint for $h \neq 
h'$, we focus on the cycle-tree graph $[F,T]$ composed of a single face $F=F_h$ of $G$ and 
of the tree $T=T_h \in {\cal H}$ inside it.
Rename the vertices of $F$ as $w_1, \dots, w_m$ in such a way that for any two 
vertices $w_x=v_p$ and $w_{x+1}=v_q$, where $p,q \in \{1,\dots,|G|\}$, it holds 
that $p < q$. As a result, $w_1$ and $w_m$ 
are the only vertices of $F$ with small internal 
angles. A vertex of $T$ is a \emph{fork vertex} if it is adjacent to 
more than two vertices of $F$ (square vertices in 
Fig.~\ref{fig:newlabelingfigure-a}), otherwise it is a \emph{non-fork vertex} 
(cross vertices in Fig.~\ref{fig:newlabelingfigure-a}). Since $[F,T]$ is 
inner-triangulated, every vertex of $T$ is adjacent to at least two vertices of 
$F$, and hence non-fork vertices are adjacent to exactly two vertices of $F$. 

\begin{figure}[tbh]
	\centering
	\subfigure[\label{fig:newlabelingfigure-a}]{\includegraphics[
		width=0.42\textwidth]{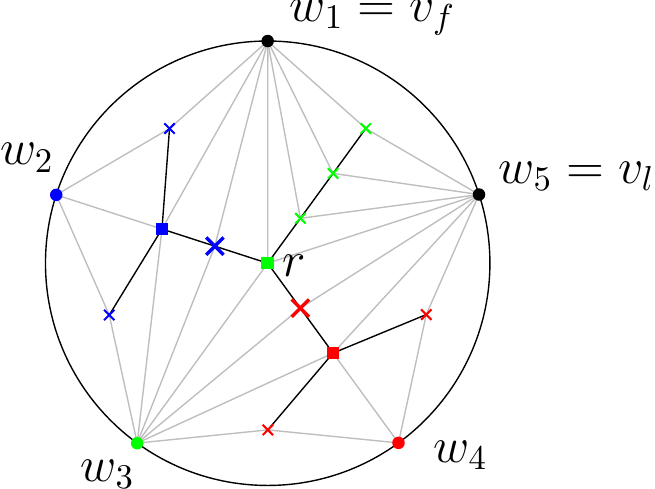}}
	\hfil
	\subfigure[\label{fig:polygon-b}]{\includegraphics[width=0.49\textwidth]{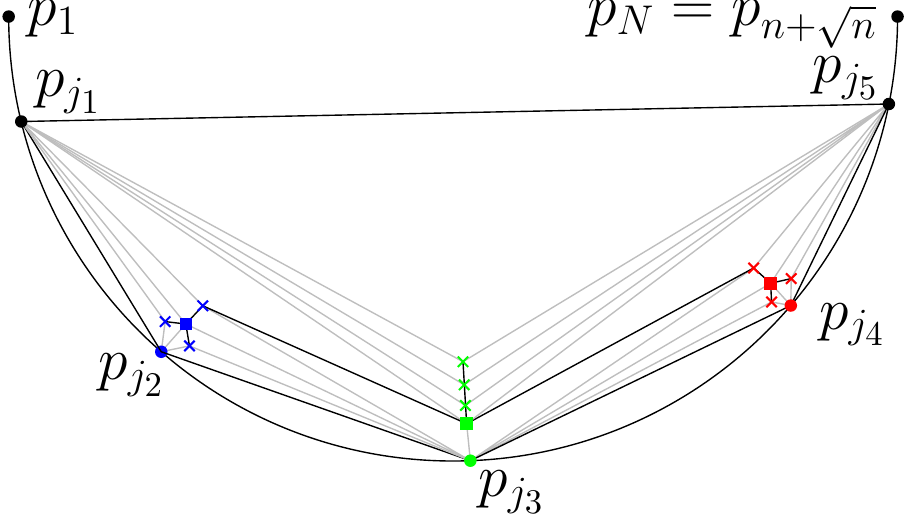}}
	\caption{\subref{fig:newlabelingfigure-a} A cycle-tree graph $[F,T]$ 
		with $F=\{w_1,w_2,w_3,w_4,w_5\}$, where $\ell(w_2)$ is blue, 
		$\ell(w_3)$ is green and $\ell(w_4)$ is red. Fork vertices are squares; foliage vertices 
		are small crosses, while branch vertices are large crosses. Tree $T'$ is 
		composed  of the root $r$ (the green square vertex) with two children (the 
		red and the  blue square vertices). Vertices of $T$ got color red, green, 
		blue according to the labeling algorithm.
		\subref{fig:polygon-b} An embedding of $[F,T]$ according to Steps a,b, and c.}
	\label{fig:labeling}
\end{figure}
We label the vertices of $T$ starting from its fork vertices. To this end, we construct a tree $T'$ 
composed only of the fork vertices, as 
follows. Initialize $T'$=$T$. Then, as long as there exists a non-fork vertex of 
degree $3$ (namely, with $2$ neighbors in $F$ and $1$ in $T'$), remove it and 
its incident edges from $T'$. The vertices removed in this step are called \emph{foliage} 
(small crosses in 
Fig.~\ref{fig:newlabelingfigure-a}). All the remaining non-fork vertices have 
degree $4$ (namely $2$ in $F$ and $2$ in $T'$); for each of them, remove it and 
its incident edges from $T'$ and add an edge between the two vertices of $T'$ 
that were connected to it before its removal. The vertices removed in this step are 
\emph{branch} vertices (large crosses in 
Fig.~\ref{fig:newlabelingfigure-a}).
A vertex $w_x \in F$ is called \emph{free} if so far 
no vertex of $T'$ has label $\ell(w_x)$. To perform the labeling, we 
traverse $T'$ bottom-up with respect to a root $r$ that is the vertex of $T'$ 
adjacent to both $w_1$ and $w_m$. Since $[F,T]$ is inner-triangulated, this 
vertex is unique. During the traversal of $T'$, we maintain the invariant that 
vertices of $T'$ are incident to only free vertices of $F$.
Initially the invariant is satisfied since all the vertices 
of $F$ are free. Let $a$ be the fork vertex considered in a step of the 
traversal of $T'$, and let $w_{a_1},\dots,w_{a_k}$ be the vertices of $F$ 
adjacent to $a$, with $1 \leq a_1 <\dots<a_k \leq m$ and $k \geq 3$. By the 
invariant, $w_{a_1},\dots,w_{a_k}$ are free. Choose any vertex $w_{a_i}$ 
such that $2 \leq i \leq k-1$, and set $\ell(a)=\ell(w_{a_i})$. 
For example, the red fork vertex in Fig.~\ref{fig:newlabelingfigure-a} adjacent to 
$w_3$, $w_4$, and $w_5$ in $F$ gets label $\ell(w_4)$. 
Since vertices $w_{a_2},\dots,w_{a_{k-1}}$ 
cannot be adjacent to any vertex of $T'$ that is visited after $a$ in the 
bottom-up traversal, the invariant is maintained at the end of each step.
At the last step of the traversal, 
when $a=r$, we have that $w_{a_1}=w_1$ and $w_{a_k}=w_m$, which are both free.


Now we label the non-fork vertices of $T$ based on the labeling of $T'$. Let $b$ 
be a non-fork vertex. If $b$ is a branch vertex, then consider the first 
fork vertex $a$ encountered on a path from $b$ to a leaf of $T$; set 
$\ell(b)=\ell(a)$. Otherwise, $b$ is a foliage vertex. In this case, consider 
the first fork vertex $a'$ encountered on a path from $b$ to the root $r$ of 
$T$. Let $v,w \in F$ be the two vertices of $F$ adjacent to $b$; assume 
$\ell(v) < \ell(w)$. If $\ell(a') \leq \ell(v)$, then set $\ell(b)=\ell(v)$; if 
$\ell(a') \geq \ell(w)$, then set $\ell(b)=\ell(w)$; and if $\ell(v) < \ell(a') 
< \ell(w)$, then set $\ell(b)=\ell(a')$ (the latter case only happens
when $a'$ is the root and $b$ is adjacent to $w_1$ and $w_m$). 
Note that the described algorithm ensures that adjacent non-fork vertices have the same label.
%
We perform the labeling procedure for every $T_h \in {\cal H}$ and obtain a 
labeling for $[G,{\cal H}]$. For each $i= 1, \dots, |G|$, we say that the subgraph of ${\cal H}$ induced 
by all the vertices of ${\cal H}$ with label $i$ is the \emph{restricted subgraph} 
$H_i$ of ${\cal H}$ for $i$ (see Fig.~\ref{fig:newlabelingfigure-a}). 

\begin{lemma}\label{lem:inducedsubtree}
	The restricted subgraph $H_i$ of $\cal H$, for each $i = 1, \dots, |G|$, 
	is a tree all of whose vertices have degree at most $2$, except for one 
	vertex that may have degree $3$. 
\end{lemma}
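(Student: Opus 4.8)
The plan is to analyze, for a fixed label $i$, the structure of $H_i$ by tracking how vertices of $\mathcal{H}$ come to receive label $i$ under the labeling algorithm. Since distinct trees $T_h$ are disjoint and each is labeled independently with respect to its own face $F_h$, it suffices to restrict attention to a single cycle-tree graph $[F,T]$ and to argue about the subgraph $T_i$ of $T$ induced by the vertices with label $i$; the global restricted subgraph $H_i$ is then a disjoint union of such $T_i$'s over the faces, and to conclude that $H_i$ itself is a single tree with the claimed degree bound I must additionally check that at most one such $T_h$ can contribute a vertex with label $i$. This last point follows because label $i$ corresponds to the outer vertex $v_i=w_x$ of $G$, and $w_x$ lies on at most one face $F_h$ in whose interior a tree is drawn with $w_x$ being neither first nor last (the first/last vertices of a face have small angles and receive no tree vertices pointed at them by the labeling, while an internal-angle vertex of one face is an extreme vertex of its neighboring faces). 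So the crux is: \emph{within one tree $T$, the set of vertices with a fixed label $i$ induces a tree of maximum degree $2$ with at most one exception of degree $3$.}

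First I would handle the fork vertices. In the bottom-up traversal of $T'$, each fork vertex $a$ receives the label of some \emph{free} vertex $w_{a_i}$ of $F$, and the invariant guarantees that once $w_{a_i}$ is used it is no longer free, so \emph{no two distinct fork vertices of $T$ get the same label}. Hence $T_i$ contains at most one fork vertex. Next I would show that the non-fork vertices carrying label $i$ form connected "clusters" attached to this fork vertex. By the algorithm, a branch vertex $b$ inherits the label of the first fork vertex $a$ on a path from $b$ toward a leaf; a foliage vertex $b$ inherits, in the generic case, the label of the first fork vertex $a'$ on the path toward the root (with the two boundary exceptions when $\ell(a')\le\ell(v)$ or $\ell(a')\ge\ell(w)$). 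The key combinatorial observation — already flagged in the text — is that \emph{adjacent non-fork vertices always receive the same label}: two adjacent branch vertices reach the same first fork vertex toward a leaf; two adjacent foliage vertices have the same pair of $F$-neighbors forced by inner-triangulation along their path and reach the same $a'$; and a branch vertex adjacent to a foliage vertex also shares its fork reference. Consequently each connected component of the non-fork vertices of $T$ is monochromatic, and it is attached in $T$ to exactly one fork vertex (the unique fork vertex whose label it copied) — except possibly for components of foliage vertices that fell into a boundary exception case.

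The main obstacle — and where I would spend most of the care — is bounding the degree in $T_i$, in particular ruling out degree $\ge 4$ and showing at most one vertex of degree $3$. Here is the shape of the argument. Contract in $T$ every monochromatic component of non-fork vertices to a single node; the result is essentially $T'$ with some pendant/edge decorations, and within $T_i$ we have at most one fork node $a$ plus the non-fork components labeled $i$. A non-fork component $C$ labeled $i$ attaches to the rest of $T_i$ only through vertices of $F$-label-$i$ adjacency patterns; since each such $C$ is a path (non-fork vertices have degree $\le 4$ in $T$, i.e.\ $\le 2$ inside $T'$), internally it is a path, so its internal degrees are $\le 2$. The only places where degree can rise to $3$ are at $a$ (which in $T'$ may have several children, but only the single child-subtree whose first fork is $a$ itself, plus the parent-side, contribute label-$i$ neighbors) and at the endpoints where a foliage-component in a boundary-exception case meets the tree. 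A case analysis on the three clauses of the foliage rule, using $\ell(v)<\ell(w)$ and the monotone labeling $\ell(w_1)<\dots<\ell(w_m)$ together with inner-triangulation, shows that these exceptional attachments can happen at most once within a fixed label class and contribute at most one extra unit of degree, giving the single allowed degree-$3$ vertex; everything else is forced to degree $\le 2$. Acyclicity of $T_i$ is immediate since $T_i\subseteq T$ is a subgraph of a tree, and connectivity follows from the cluster-attachment picture above. Assembling these pieces yields that $H_i$ is a tree with all degrees $\le 2$ except possibly one vertex of degree $3$.
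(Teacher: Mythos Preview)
Your approach is the same as the paper's in outline: at most one fork vertex per label, non-fork vertices have degree $\le 2$ in $T$ so only the fork vertex can have higher degree, and adjacent non-fork vertices share a label so the pieces are connected. Where your argument falls short is precisely the step you flag as ``the main obstacle'': you never actually bound the degree of the fork vertex $a$ in $H_i$, and your sketch of how to do so is muddled. You write that ``only the single child-subtree whose first fork is $a$ itself, plus the parent-side, contribute label-$i$ neighbors'', but this has the direction wrong. A branch vertex $b$ is labeled by the first fork on a path \emph{toward a leaf}; hence the branch vertices labeled $\ell(a)$ are exactly those on the $T'$-edge from $a$ to its \emph{parent} (going leafward from such a $b$ you hit $a$), not on any child-subtree of $a$. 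So your one-branch-path claim is right but for the wrong reason, and you give no argument bounding the number of foliage paths beyond a promised ``case analysis''.

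The paper makes this concrete as follows. With $\ell(w_x)=i$, the unique path of branch vertices in $H_i$ is the one connecting $a$ to its parent in $T'$. For foliage vertices, the rule assigns label $\ell(w_x)$ only to foliage vertices adjacent to $w_x$ in $F$; these come in exactly two chains, one consisting of foliage vertices adjacent to $w_{x-1}$ and $w_x$, the other adjacent to $w_x$ and $w_{x+1}$. That gives at most three paths meeting at $a$, hence degree at most $3$. The one delicate case is $a=r$: then there is a third foliage chain (adjacent to $w_1$ and $w_m$, via the $\ell(v)<\ell(a')<\ell(w)$ clause), but $r$ has no parent in $T'$ and hence no branch path, so the count is still three. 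Your ``exceptional attachments happen at most once'' is exactly this root case, but you need the explicit identification of the two generic foliage chains to see that the total is three and not more.
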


\begin{proof}
	First observe that, due to the procedure used to label the vertices of $T'$, 
	graph $H_i$ contains at most one fork vertex $a$, which is hence the only one 
	that may have degree larger than $2$. Since adjacent non-fork vertices got the same 
	label, $H_i$ is connected and only contains paths of non-fork vertices incident 
	to $a$. We prove that there exist at most three of such paths. First, $H_i$ 
	contains at most one path of branch vertices incident to $a$, namely the one 
	connecting it to its unique parent in $T'$. Further, $H_i$ contains at most two 
	paths of foliage vertices incident to $a$, namely one composed of the 
	foliage vertices adjacent to $w_x$ and to $w_{x-1}$, and one composed of the 
	foliage vertices adjacent to $w_x$ and to $w_{x+1}$, where $w_{x-1}, w_x, 
	w_{x+1} \in G$ and $\ell(w_x)=i$. Note that, if $a$ coincides with the root $r$ 
	of $T$, there might exist three paths of foliage vertices incident to $a$, 
	namely the two that are incident to $w_x$, $w_{x-1}$, and $w_{x+1}$, as before, 
	plus one composed of the foliage vertices that are incident to both $w_1$ and 
	$w_m$; however, since $r$ has no parent in $T'$, there is no path of 
	branch vertices incident to $a$ in this case. This concludes the proof of the 
	lemma.
	\qed
\end{proof}

\subsection{Embedding on the Point Set} 
\label{subsection:Embedding}

We describe an embedding algorithm consisting of three steps (see Fig.~\ref{fig:polygon-b}). 

\noindent {\bf Step a:} 
Let $\omega: G \rightarrow \mathbb{N}$ be a weight function with $\omega(v_i) = |\{v \in [G,{\cal 
	H}] \mid \ell(v) = i\}|$ for every $v_i \in G$. Note that $\sum_{v_i \in 
	G}{\omega(v_i)} = n$. We categorize each vertex $v_i \in G$ as \emph{sparse} if $1\leq \omega(v_i) \leq \sqrt{n}$, and \emph{dense}
if $\omega(v_i) > 
\sqrt{n}$. Note that there are at most $\sqrt{n}$ dense vertices. 

\noindent {\bf Step b:} 
We draw the vertices $v_1,\dots,v_{|G|}$ of $G$ on the $N := n + 
\sqrt{n}$ points of $\pi$ in the same order as they appear along the outer face 
of $\Gamma$, in such a way that dense (resp. sparse) vertices are placed on 
dense (resp. sparse) points. The resulting embedding $\widetilde{\Gamma}$ 
of $G$ is planar since $\Gamma$ is planar. The construction of $\widetilde{\Gamma}$ implies the following.

\begin{property} \label{property:polygoncontent}
	Let $Q=\{p_{j_1}, \dots, p_{j_m}\} \subseteq \pi$, $j_i < j_{i+1}$, be the polygon representing a face of $G$. Polygon $Q$ contains in its interior all the point sets $S_{j_2}, \dots, S_{j_{m-1}}$.
\end{property}

\noindent {\bf Step c:} 
Finally, we consider forest ${\cal H} = \{T_1,\dots,T_k\}$. We describe the embedding algorithm for a single cycle-tree graph 
$[F,T]$, where $F = w_1, \dots, w_m$ is a face of $G$ and $T \in {\cal H}$ is 
the tree lying inside $F$. We show how to embed the restricted 
subgraph $H_i$, for each vertex $w_x$ of $F$ with label $\ell(w_x)=i$, on the 
point set $S_j$ of the point $p_j$ where $w_x$ is placed. We remark that the 
labeling procedure ensures that $|H_i|+1 = \omega(w_x) \leq |S_j|$; also, by 
Property~\ref{property:polygoncontent}, point set $S_j$ lies inside the polygon 
representing $F$, except for the two points where vertices $w_1$ and $w_m$ have 
been placed.

By Lemma~\ref{lem:inducedsubtree}, $H_i$ has at most one (fork) vertex $a$ of degree 
$3$, while all other vertices have smaller degree. We place $a$, if any, on the 
center point $p_j^C$ of $p_j$. The at most three paths of non-fork vertices 
are placed on segments $s^+_j, s^-_j, s^N_j$ starting from $p_j^C$; 
namely, the unique path of branch vertices is placed on 
$s^N_j$, while the two paths of foliage vertices are placed on $s^+_j$ or 
$s^-_j$ based on whether the vertex of $G$ different from $w_x$ they are 
incident to is $w_{x+1}$ or $w_{x-1}$, respectively. If $a=r$, then the path of 
foliage vertices incident to $w_1$ and $w_m$ is placed on $s^N_j$. 

\begin{figure}[tbh]
	\centering
	\subfigure[\label{fig:embedding-path-a}]{\includegraphics[
		width=0.32\textwidth]{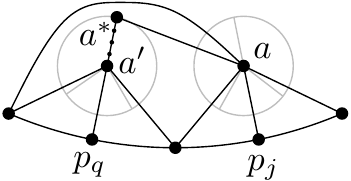}}
	\hfil
	\subfigure[\label{fig:embedding-path-b}]{\includegraphics[width=0.32\textwidth]{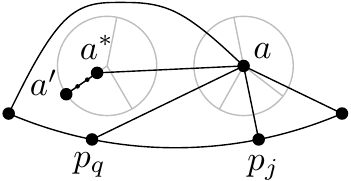}}
	\caption{\subref{fig:embedding-path-a} $P$ contains $a' \neq a$. \subref{fig:embedding-path-b} $a'$ is a leaf of $T$.}
	\label{fig:embedding-path}
\end{figure}

We show that this results in a planar drawing of $T$.
First, for every two fork vertices $a \in H_p$ and $a' \in H_q$, 
with $p < q$, all the leaves of the subtree of $T$ rooted at $a$ have smaller 
label than all the leaves of the subtree of $T$ rooted at $a'$. 
Then, for each $w_x \in F$, with $\ell(w_x)=i$, consider the fork vertex $a \in 
H_i$, which lies on $p_j^C$. Let $P$ be any path 
connecting $a$ to a leaf of $T$ and let $a^*$ be the neighbor of $a$ in $P$. 
If $P$ contains a fork vertex other than $a$ (Fig.~\ref{fig:embedding-path-a}), 
then let $a'$ be the fork vertex 
in $P$ that is closest to $a$ (possibly $a'$=$a^*$) and let $p_q^C$ be the point 
where $a'$ has been placed. Assume $q < j$, the case $q > j$ is analogous. 
By definition, the non-fork vertices in the path from $a$ to $a'$ 
(if any) are branch vertices, and hence lie on 
$s_q^N$. Then, Property~\ref{property:visibility} 
ensures that the straight-line edge $(a,a^*)$ separates all the 
point sets $S_p$ with $q < p < j$ from the center of $\pi$. Since 
the vertices on $S_p$ are only connected either to each other or 
to the vertices on $s_j^-$ and $s_q^+$, edge $(a,a^*)$ is not involved in any crossing.

If $P$ does not contain any fork vertex other than $a$ (Fig.~\ref{fig:embedding-path-b}), 
then all the vertices of 
$P$ other than $a$ are foliage vertices and are placed on a segment $s_q^+$ or 
$s_q^-$, for some $q$. In particular, if $q < j$, then they are on $s_q^-$; if 
$q > j$, then they are on $s_q^+$; while if $q = j$, then they are either on 
$s_q^+$ or on $s_q^-$. In all the cases, 
Property~\ref{property:visibility} ensures that edge $(a,a^*)$ does not cross any edge.

Finally, observe that any path of $T$ containing only non-fork vertices is 
placed on the same segment of the point set, and hence its edges do not cross. As for the edges 
connecting vertices in one of these paths to the two leaves of $T$ they are 
connected to, note that by item $(A)$ of Property~\ref{property:visibility} the 
edges between each of these leaves and these vertices appear in the rotation at 
the leaf in the same order as they appear in the path.


\begin{lemma} \label{theorem:2outerplanartriangulated}
	There exists a universal point set of size $O(n^{3/2})$ for the class of 
	$n$-vertex inner-triangulated $2$-outerplanar graphs $[G,{\cal H}]$ where 
	${\cal 
		H}$ is a forest.
\end{lemma}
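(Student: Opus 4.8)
The plan is to assemble the machinery developed in the preceding subsections into a single embedding argument. First I would invoke the construction of Subsection~\ref{subsection:Construction}: the point set $S$ has size $(\sqrt{n}-1)(3n+1)+(n-1)(3\sqrt{n}+1) = O(n^{3/2})$ and satisfies Property~\ref{property:visibility}. Then, given an arbitrary $n$-vertex inner-triangulated $2$-outerplanar graph $[G,{\cal H}]$ with ${\cal H}$ a forest, I would apply the labeling procedure of Subsection~\ref{subsection:Labeling} to obtain a labeling $\ell$ and, for each $i$, the restricted subgraph $H_i$; by Lemma~\ref{lem:inducedsubtree} each $H_i$ is a tree with all degrees at most $2$ except possibly one vertex of degree $3$. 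The weight function $\omega$ of Step~a satisfies $\sum_{v_i\in G}\omega(v_i)=n$, so there are at most $\sqrt{n}$ dense vertices, which is exactly the number of dense points available in $S_{\cal D}$; hence Step~b can place every vertex of $G$ on a point of $\pi$ of matching type (dense/sparse) while respecting the outer cyclic order of $\Gamma$, yielding a planar drawing $\widetilde{\Gamma}$ of $G$.

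Next I would carry out Step~c face by face: for each face $F=w_1,\dots,w_m$ of $G$ and its tree $T\in{\cal H}$, and for each vertex $w_x$ of $F$ with $\ell(w_x)=i$ placed on point $p_j$, embed $H_i$ on the point set $S_j$ as prescribed — the degree-$3$ fork vertex (if any) on the center $p_j^C$, the branch path on $s_j^N$, and the two foliage paths on $s_j^+$ and $s_j^-$ according to whether they connect to $w_{x+1}$ or $w_{x-1}$ (with the special case $a=r$ handled by placing the $w_1$–$w_m$ foliage path on $s_j^N$). The counting inequality $|H_i|+1=\omega(w_x)\le|S_j|$ guarantees enough room, and Property~\ref{property:polygoncontent} guarantees that $S_j$ (minus the two small-angle vertices $w_1,w_m$) lies inside the polygon representing $F$, so distinct faces do not interfere.

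It then remains to verify planarity of the whole drawing. The edges of $G$ are fine since $\widetilde\Gamma$ is planar. For each tree $T$, the three classes of edges to check are: (i) edges inside a single non-fork path, which lie on a common segment and hence are non-crossing; (ii) the ``hinge'' edges from a placed leaf back to the path it attaches to, handled by item~(A) of Property~\ref{property:visibility} which matches the rotation at the leaf to the path order; and (iii) the critical edges $(a,a^*)$ leaving a fork vertex $a$ on $p_j^C$ toward another fork vertex or a leaf on some $p_q^C$ or segment $s_q^{\pm}$ — here items~(B) and~(C) of Property~\ref{property:visibility} show that $(a,a^*)$ separates all intermediate point sets $S_p$ from the center $O$, and since those $S_p$ only carry edges internal to themselves or to $s_j^-$/$s_q^+$, no crossing arises. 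I would also record the monotonicity fact that for fork vertices $a\in H_p$, $a'\in H_q$ with $p<q$, all leaves below $a$ have smaller labels than all leaves below $a'$, which is what makes the ``analogous'' $q>j$ cases genuinely symmetric. Collecting these observations shows the drawing is planar and straight-line on $S$, proving the lemma.

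The main obstacle is step~(iii): making the crossing analysis of the fork-to-fork (and fork-to-leaf) edges fully rigorous across all the cases ($q<j$, $q>j$, $q=j$; $a'$ a fork vs.\ $a'$ a leaf; $a=r$ vs.\ not), since this is where the geometry of the nested circles $\pi_j$ and the precise placement of $p_j^+,p_j^-,p_j^N$ actually gets used through Property~\ref{property:visibility}. Everything else is bookkeeping: the size bound is a direct count, the type-matching in Step~b is a pigeonhole on the $\le\sqrt n$ dense vertices, and the containment of point sets inside face-polygons is Property~\ref{property:polygoncontent}.
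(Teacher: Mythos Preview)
Your proposal is correct and follows essentially the same approach as the paper: the lemma is stated without a separate proof because Sections~\ref{subsection:Construction}--\ref{subsection:Embedding} \emph{are} its proof, and you have accurately recapitulated that argument (construction of $S$ with the $O(n^{3/2})$ count, labeling with Lemma~\ref{lem:inducedsubtree}, Steps~a--c of the embedding, and the planarity analysis via Properties~\ref{property:visibility} and~\ref{property:polygoncontent}). Your identification of step~(iii) as the only non-bookkeeping part matches the paper's emphasis, and the case split you outline there is exactly the one the paper carries out.
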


\section{$2$-Outerplanar Graphs with Forest}
\label{section:Forests}

In this section we consider $2$-outerplanar graphs $[G,{\cal H}]$ where ${\cal 
	H}$ is a forest. Contrary to the previous section, we do not assume $[G,{\cal H}]$ 
to be inner-triangulated. As observed before, augmenting it might be not possible 
without introducing multiple edges.
The main idea to overcome this problem is to first identify the parts of $[G,{\cal H}]$ 
not allowing for the augmentation, remove them, and augment 
the resulting graph with dummy edges to inner-triangulated 
(Section~\ref{subsection:replacement}); then, apply 
Lemma~\ref{theorem:2outerplanartriangulated} to embed the inner-triangulated 
graph on the point set $S$; and finally remove the dummy edges and embed the 
parts of the graph that had been previously removed on the remaining points 
(Section~\ref{subsection:reverting}). To do so, we first 
need to extend the point set $S$ with some additional points.

\subsection{Extending the Universal Point Set}
\label{subsection:extension}

\begin{figure}[t]
	\subfigure[\label{fig:universalpointset2-a}]{\includegraphics[width=0.49\textwidth]{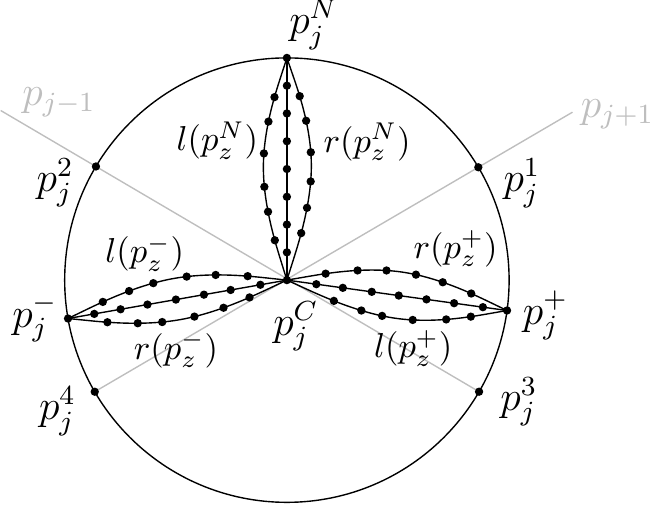}}
	\hfill
	\subfigure[\label{fig:universalpointset2-b}]{\includegraphics[width=0.49\textwidth]{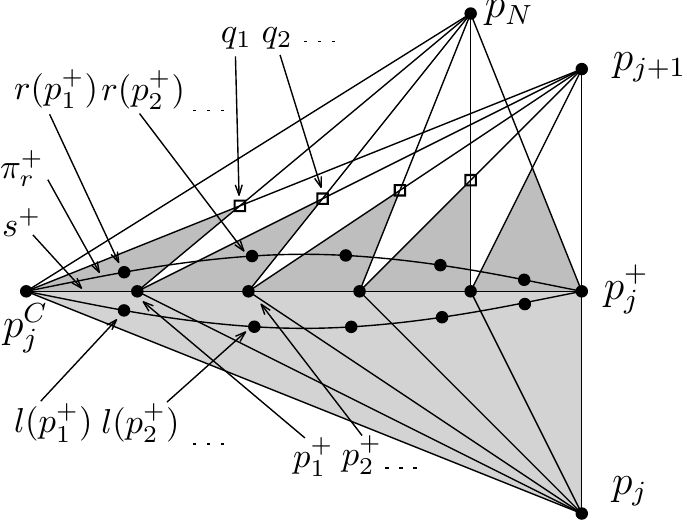}}
	\caption{Construction of petal points for $s^+$. Dark-gray triangles are used for petal points $r(p^{+}_z)$ while light-gray triangles for $l(p^{+}_z)$.}
	\label{fig:universalpointset2}
\end{figure}

We construct a point set $S^*$ with $O(n^{3/2})$ points from $S$ by adding 
\emph{petal points} to segments $s^+_j, s^N_j, s^-_j$ of the point sets $S_j$, 
for every $j$=$2,\dots,N-1$ (see 
Fig.~\ref{fig:universalpointset2-a}). For simplicity of notation, we skip the 
subscript $j$ whenever possible.
We denote by $p^\sigma_z$ the $z$-th point 
on segment $s^\sigma$, with $\sigma \in \{+,-,N\}$ and $z$=$1, \dots, \overline{n}$ (where 
$\overline{n}$=$\sqrt{n}$ or $\overline{n}$=$n$, depending on whether $p_j$ is sparse or dense), 
so that $p^\sigma_1$ is the point following $p^C$ along $s^\sigma$ and $p^\sigma_{\overline{n}} = p^\sigma_j$. 
For each point $p^\sigma_z$ we add two \emph{petal points} $l(p^{\sigma}_z)$ 
and $r(p^{\sigma}_z)$ to $S^*$. 

We first describe the procedure for $s^+$, see Fig.~\ref{fig:universalpointset2-b}. 
For each $z$=$1, \dots, \overline{n}$, consider the intersection point $q_z$ between segments $s(p^+_{z-1} p_{j+1})$ and 
$s(p^+_{z} p_N)$, where $p^+_{z-1} = p_j^C$ when $z=1$. 
By construction, all triangles $\triangle p^+_{z-1} p^+_{z} q_z$ have two corners on $s^+$, 
have the other corner in the same half-plane delimited by the line through $s^+$, 
and do not intersect each other except at common corners. Hence, there exists a 
convex arc $\pi^+_r$ passing through $p^C_j$ and $p^+_{\overline{n}}=p^+_j$, and 
intersecting the interior of every triangle. For each $z = 1, \dots, \overline{n}$, we 
place the petal point $r(p^{+}_z)$ on the arc of $\pi^+_r$ lying inside triangle 
$\triangle p^+_{z-1} p^+_{z} q_z$. For the other petal point
$l(p^{+}_z)$ we use the same procedure by 
considering triangles $\triangle p^+_{z-1} p^+_z p_j$ instead of $\triangle 
p^+_{z-1} p^+_z q_z$.
Symmetrically we place the petal points for $s^-$, using points 
$p_{j-1}$ and $p_1$ to place $l(p^{-}_z)$ and point $p_j$ to place $r(p^{-}_z)$, 
and for $s^N$, using points $p_{j-1}$ and $p_1$ to place $l(p^{N}_z)$ and points 
$p_{j+1}$ and $p_N$ to place $r(p^{N}_z)$.

Recall that we have $N = n+\sqrt{n}$ points $p_j$ on the outer half circle $\pi$ 
of $S$, and $N-2$ of them have their point set $S_j$. For each dense $p_j$ we 
added $6n$ petal points to $S^*$, while for every sparse $p_j$ we added $6\sqrt{n}$ 
petal points. Hence, the new point set $S^*$ has $(\sqrt{n}-1)(9n+1) + (n-1)(9\sqrt{n}+1)$=$O(n^{3/2})$ points.


\subsection{Modifying and Labeling the Graph}
\label{subsection:replacement}

We now aim at modifying $[G,{\cal H}]$ to obtain an inner-triangulated graph that 
can be embedded on the original point set $S$ ({\bf Part A} 
and {\bf Part B}); in Section~\ref{subsection:reverting} we describe how to 
exploit this embedding on $S$ to obtain an embedding of 
the original graph $[G,{\cal H}]$ on the extended point set $S^*$ ({\bf Part C}).
We describe the procedure just for a cycle-tree graph $[F,T]$ composed 
of a face $F$ of $G$ and of the tree $T$ inside it.

We first summarize the operations performed in the different Parts and then give more details in the following.

\begin{enumerate}
	\item {\bf Part A:}
	\begin{itemize}
		\item We delete some edges from $[F,T]$ connecting $F$ with $T$ 
		to identify ``tree components'', resulting in a new graph $[F,T'=T]$; note that 
		the set of edges connecting $T'$ to $F$ might be different from the set of edges 
		connecting $T$ to $F$. 
		\item We delete from $[F,T']$ the ``tree components'', to be defined later, and obtain 
		a new graph $[F,T'' \subseteq T']$ which has the property that it admits an augmentation to inner-triangulated without multiple edges. 
		\item We augment $[F,T'']$ to an inner-triangulated graph 
		$[F,T^{\Delta}=T'']$; again, instance $[F,T^{\Delta}]$ might differ from 
		$[F,T'']$ only on the set of edges connecting the two levels.
	\end{itemize} 
	\item We label $[F,T^{\Delta}]$ with the algorithm described in 
	Section~\ref{subsection:Labeling}.
	\item {\bf Part B:}
	\begin{itemize}
		\item We insert vertices in $[F,T^{\Delta}]$ representing the 
		previously removed tree components and give suitable labels to these vertices, 
		hence obtaining a new instance $[F,T^{\cal A} \supseteq T^{\Delta}]$. 
		By adding appropriate edges we keep the instance triangulated.
	\end{itemize}
	\item We embed $[F,T^{\cal A}]$ on point set $S$ 
	with the algorithm described in Section~\ref{subsection:Embedding}.
	\item {\bf Part C:}
	\begin{itemize}
		\item We obtain a planar embedding of $[F,T]$ on point set $S^*$ 
		by removing all the vertices and edges added during these steps and by suitably 
		adding back the removed edges and tree components.
	\end{itemize}
\end{enumerate}

\vspace{0.1 cm}
\noindent
{\bf Part A:}
We categorize each face $f$ of $[F,T]$ 
based on the number of vertices of $F$ and of $T$ that are incident to it. 
Since $T$ is a tree, $f$ has at least a vertex of $F$ and a vertex of $T$ 
incident to it. If $f$ contains exactly one vertex of $F$, then it is a \emph{petal face}.
If $f$ contains exactly one vertex of $T$, then it is a \emph{small face}.
Otherwise, it is a \emph{big face}.
Consider a big face $f$ and let $b_1,\dots,b_l$ be the occurrences of the vertices of $T$ in 
a clockwise order walk along the boundary of $f$. If either $b_1$ or $b_l$, say $b_1$, has more 
than one adjacent vertex in $F$ (namely one in $f$ and at least one not in $f$), 
then $f$ is \emph{protected} by $b_1$.
If $f$ is a big face with exactly two vertices incident to $F$ and is not 
protected by any vertex, then $f$ is a \emph{bad face}.

The next lemma gives sufficient conditions to triangulate $G$ without 
introducing multiple edges; we will later use this lemma to identify the ``tree 
components'' of $T$ whose removal allows for a triangulation.

\begin{figure}[t]
	\subfigure[\label{fig:triangulationedges-a}]{\includegraphics[width=0.19\textwidth]{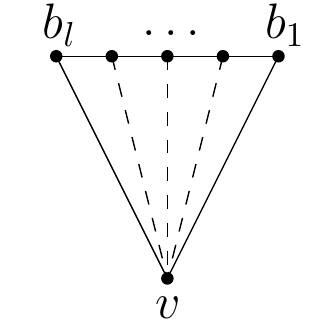}}
	\hfill
	\subfigure[\label{fig:triangulationedges-d}]{\includegraphics[width=0.19\textwidth]{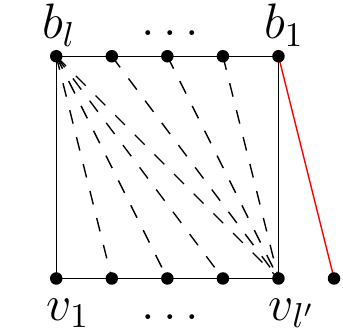}}
	\hfill
	\subfigure[\label{fig:triangulationedges-c}]{\includegraphics[width=0.19\textwidth]{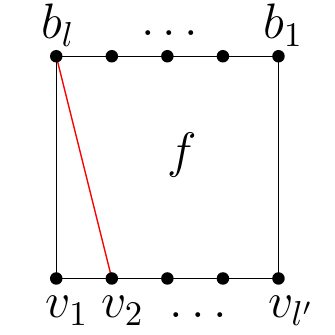}}
	\hfill
	\subfigure[\label{figap:bigfaces-a}]{\includegraphics[width=0.19\textwidth]{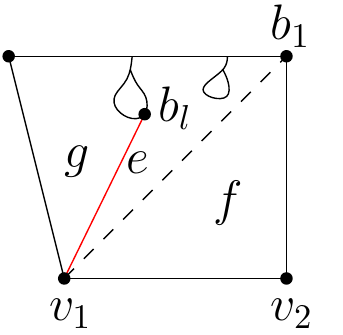}}
	\hfill
	\subfigure[\label{figap:bigfaces-b}]{\includegraphics[width=0.19\textwidth]{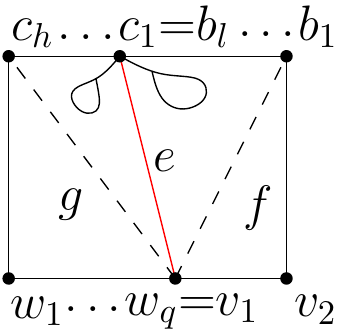}}
	\caption{\subref{fig:triangulationedges-a}--\subref{fig:triangulationedges-c}: Insertion of triangulation edges in \subref{fig:triangulationedges-a} 
		a petal face, \subref{fig:triangulationedges-d} a non-protected big face, and \subref{fig:triangulationedges-c} a big face protected by vertex $b_1$. \subref{figap:bigfaces-a}--\subref{figap:bigfaces-b} Illustration of the two cases for removing bad faces. Face $g$ is a petal face in \subref{figap:bigfaces-a} and a big face in \subref{figap:bigfaces-b}. 
		Dummy edges are dashed, while the removed edge $e$ is red.}
	\label{fig:triangulationedges}
\end{figure}

\begin{lemma}\label{lemma:triangulation} 
	Let $[F,T]$ be a biconnected simple cycle-tree graph, such that $(1)$ each 
	vertex of $F$ has degree at most four, and $(2)$ there exists no bad face in 
	$[F,T]$. It is possible to augment $[F,T]$ to an inner-triangulated simple cycle-tree 
	graph. 
\end{lemma}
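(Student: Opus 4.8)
The plan is to triangulate the interior of each internal face of $[F,T]$ independently, adding only edges between a vertex of $F$ and a vertex of $T$: this restriction is forced, since $F$ must stay chordless and $T$ must stay acyclic for the result to be a cycle-tree graph. Before triangulating, I would establish the structure of the faces. As $[F,T]$ is biconnected, every internal face $f$ is bounded by a simple cycle, along which the vertices of $F$ appear in maximal blocks that are subpaths of the cycle $F$ ($F$-runs) and the vertices of $T$ appear in maximal blocks that are paths of $T$ ($T$-runs), the two kinds alternating. The key structural claim is that a big face has exactly one $F$-run and exactly one $T$-run. Indeed, the part of the disk bounded by $F$ that lies outside $f$ is pinched by the endpoints of the $F$-runs (which are vertices of $F$) into regions, one per $T$-run, and the subtree of $T$ attached to a given $T$-run is confined to the corresponding region; since a path of $T$ cannot pass through a vertex of $F$, two $T$-runs could never be joined inside $T$, contradicting connectivity of $T$ unless there is only one. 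The same confinement argument shows that every $F$-run interior vertex of a big or small face has degree $2$ in $[F,T]$, its only edges being the two incident edges of the cycle $F$, both shared by $f$ and the outer face.

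Next I would dispatch the three face types. A petal face is bounded by a single vertex $v\in F$ and a $T$-path $t_1,\dots,t_k$; fan-triangulate from $v$ by adding $vt_2,\dots,vt_{k-1}$. If $k=2$ the face is already a triangle, and if $k\ge 3$ then $t_1\neq t_k$, so $v$ is incident to its two neighbours on $F$ together with $t_1$ and $t_k$, hence has degree exactly $4$ by hypothesis~(1), and the new chords are not pre-existing edges. A small face is bounded by a single vertex $t\in T$ and a subpath $v_1,\dots,v_m$ of the cycle $F$; fan-triangulate from $t$. A chord $tv_i$ with $2\le i\le m-1$ cannot already be an edge: drawn outside $f$, it would be a chord of the facial cycle separating the remaining arc of $F$ (running from $v_m$ back to $v_1$ and avoiding $v_2,\dots,v_{m-1}$) into two parts lying on opposite sides of it, which is impossible for a single cycle $F$.

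The crux is the big faces, which by the structural claim are of the form $v_1,\dots,v_m,t_1,\dots,t_p$ with $m,p\ge 2$, a single $F$-run followed by a single $T$-run. If $m\ge 3$, every interior $F$-vertex has degree $2$ in $[F,T]$, so every chord incident to such a vertex is new; I would cut the ears at $v_1$ and $v_m$ (adding $v_2t_p$ and $v_{m-1}t_1$), then fan the $T$-run onto the now-safe vertex $v_2$ (adding $v_2t_1,\dots,v_2t_{p-1}$), and finally fan the remaining arc onto $t_1$ (adding $t_1v_3,\dots,t_1v_{m-2}$); every chord used is incident to an interior $F$-vertex, hence new, and the resulting faces are triangles. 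If $m=2$, the face has exactly two vertices of $F$, so not being a bad face means it is protected, say by $b_1$; letting $u\in\{v_1,v_2\}$ be the vertex of $f$ adjacent to $b_1$, the protecting vertex $b_1$ has a neighbour $v^{*}\in F$ not on $f$, and I would show this forces $u$ to have $b_1$ as its only neighbour in $T$: a pre-existing chord from $u$ to a later vertex of the $T$-run, drawn outside $f$, would enclose $b_1$ on the side of the facial cycle opposite to $v^{*}$ and cut $b_1$ off from $v^{*}$. Then fanning from $u$ triangulates $f$ using only new chords, and the case of protection by $b_l$ is symmetric. I expect this big-face analysis to be the main obstacle — establishing the one-run structural claim, and, for two-$F$-vertex faces, converting ``protected'' into the adjacency statement that makes the fan valid. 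Finally, since the interiors of distinct faces are disjoint and, as shown, no chord added coincides with an edge of $[F,T]$ (nor, by the same degree-$2$ and confinement facts, with a chord added to another face), the union of the per-face triangulations is an inner-triangulated simple cycle-tree graph augmenting $[F,T]$.
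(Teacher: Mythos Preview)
Your proof is correct and follows essentially the same face-by-face fanning strategy as the paper: petal faces fan from the lone $F$-vertex, small faces fan from the lone $T$-vertex, and for a big face with only two $F$-vertices the protection hypothesis is converted (via a planarity/enclosure argument) into the statement that the $F$-vertex adjacent to the protector has no other neighbour on the $T$-run, so one can fan from it. The only noteworthy differences are that you make explicit and justify the one-$F$-run/one-$T$-run structure of every inner face (which the paper silently assumes in its notation $v_1,\dots,v_{l'},b_1,\dots,b_l$), and that for big faces with $m\ge 3$ you triangulate directly via the degree-$2$ interior $F$-vertices, whereas the paper first adds one edge to reduce to the protected case and then fans from an endpoint $F$-vertex; both variants are valid, and yours has the mild advantage that every new chord in that case is incident to a degree-$2$ vertex, making simplicity immediate.
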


\begin{proof}
	Let $f$ be any face of $[F,T]$. We describe how to triangulate $f$ without 
	creating multiple edges.
	
	Suppose $f$ is a petal face (see Fig.~\ref{fig:triangulationedges-a}); let $v,b_1,...,b_l$ (with $l>2$) be the vertices on 
	its boundary, where $v \in F$ and $b_i \in T$ for $1 \leq i \leq l$. We 
	triangulate $f$ by adding an edge $(v,b_i)$, for each $2 \leq i \leq l-1$. Since 
	$[F,T]$ is biconnected, there exists no multiple edge inside $f$. Also, since 
	condition (1) ensures that $v \in F$ has degree at most four, there is no petal 
	face incident to $v$ other than $f$, and thus no multiple edge is created 
	outside $f$. 
	
	Suppose $f$ is a small face; let $v_1,\dots,v_{l'},b$ (with $l>2$) be the vertices 
	on its boundary, where $v_i \in F$ for $1 \leq i \leq l'$ and $b \in T$. We 
	triangulate $f$ by adding an edge $(b,v_i)$, for each $2 \leq i \leq l'-1$. Note 
	that, before introducing these edges, vertices $v_2,\dots,v_{l'-1} \in F$ were 
	not connected to any vertex of $T$ (and in particular to $b$); thus, no multiple 
	edge is created.
	
	Suppose $f$ is a big face that is not a bad face; let 
	$v_1,...,v_{l'},b_1,...,b_l$ (with $l,l'>1$) be the vertices along the boundary 
	of $f$, where $v_1,...,v_{l'} \in F$ and $b_1,...,b_l \in T$.
	If $f$ is not protected by any vertex (see Fig.~\ref{fig:triangulationedges-c}), then $l'\geq 3$, as otherwise it would be a bad face. This implies that vertex $v_2 \in F$ 
	is not connected to any vertex of $T$. Hence, it is possible to add edge 
	$(b_l,v_2)$ without creating multiple edges. Face $f$ is hence split into a triangular 
	face $v_1,v_2,b_l$ and a big face that is protected by $b_l$, which we cover in the next case.
	Otherwise, $f$ is protected by a vertex. 
	If $f$ is protected by $b_1$ (see Fig.~\ref{fig:triangulationedges-d}), 
	then we triangulate $f$ by adding edges $(b_i,v_{l'})$, for $2 \leq i \leq l$ 
	and $(b_l,v_i)$, for $2\leq i \leq l'-1$.
	If $f$ is protected by $b_l$, 
	then we triangulate $f$ by adding edges $(b_i,v_1)$, for $1 \leq i \leq l-1$ and 
	$(b_1,v_i)$, for $2 \leq i \leq l'-1$.
	Note that, before introducing these edges, vertices $v_2,\dots,v_{l'-1} \in F$ 
	were not connected to any vertex of $T$ (and in particular to $b_1$ and $b_l$); 
	also, vertices $b_2,\dots,b_l$ (vertices $b_1,\dots,b_{l-1}$) were not connected 
	to $v_{l'}$ (resp. to $v_1$), $f$ was protected by $b_1$ (resp. $b_l$). 
	Thus, no multiple edge is created.
	
	Since by condition (2) there exists no bad face in $[F,T]$, all the possible 
	cases have been considered; this concludes the proof of the lemma.
	\qed
\end{proof}

We now describe a procedure to transform cycle-tree graph $[F,T]$ into 
another one $[F,T'']$ that is biconnected and satisfies the conditions of 
Lemma~\ref{lemma:triangulation}. We do this in two steps: first, we remove some 
edges connecting a vertex of $F$ and a vertex of $T$ to transform 
$[F,T]$ into a cycle-tree graph $[F,T'$=$T]$ that is not biconnected but that 
satisfies the two conditions; then, we remove the ``tree 
components'' of $T'$ that are not connected to vertices of $F$ in order to 
obtain a cycle-tree graph $[F,T'' \subseteq T']$ that is also biconnected.

To satisfy condition (1) of Lemma~\ref{lemma:triangulation}, 
we merge all the petal faces incident to the same vertex of $F$ into 
a single one by repeatedly removing an edge shared by two adjacent petal faces.
We refer to these removed edges as \emph{petal edges}, denoted by $E_P$.

To satisfy condition (2) of Lemma~\ref{lemma:triangulation}, we consider 
each bad face $f = v_1,v_2,b_1,\dots,b_l$, where $v_1,v_2 \in F$ and 
$b_1,\dots,b_l \in T$. Let $g$ be the face incident to $v_1$ sharing 
edge $e=(v_1,b_l)$ with $f$. We remove $e$, hence merging $f$ and $g$ 
into a single face $f'$, that we split again by adding dummy edges, 
based on the type of face $g$, in such a way that no new bad face is created.
Since $f$ is a bad face, it is not protected by $b_l$, and hence $g$ is not a small face.
If $g$ is a petal face, then $f'$ is still a big face with two vertices of $F$ 
incident to it, namely $v_1$ and $v_2$; see 
Fig.~\ref{figap:bigfaces-a}. We add edge $(v_1,b_1)$, 
splitting $f'$ into a petal face $v_1,b_1,\dots,b_l$ and a triangular face 
$v_1,v_2,b_1$.
If $g$ is a big face, then $f'$ is a big face; see 
Fig.~\ref{figap:bigfaces-b}. Let $w_1,\dots,w_q,c_1,\dots,c_h$ be the occurrences of vertices 
incident to $g$, where $w_1,\dots,w_q \in F$, with $w_q=v_1$, and $c_1,\dots,c_h 
\in T$, with $c_1=b_l$. We add two dummy edges $(v_1,c_h)$ and $(v_1,b_1)$, 
splitting $f'$ into a small face $w_1,\dots,w_{q},c_h$, a petal 
face $v_1,b_1,\dots,b_l=c_1,\dots,c_h$, and a triangular face $v_1,v_2,b_1$.
The edges removed in this step are \emph{big face edges}, 
denoted by $E_B$, and the added edges are \emph{triangulation edges}.

In order to make $[F,T']$ biconnected, note that $[F,T']$ 
consists of a biconnected component which contains $F$, called 
\emph{block-component}, and a set ${\cal T}_B$ of subtrees of $T'$, called 
\emph{tree components}, each sharing a cut-vertex with the block component.
We remove the tree components ${\cal T}_B$ from $[F,T']$ and obtain an instance 
$[F,T'' \subseteq T']$, that is actually the block component of $[F,T']$. Since the removal of ${\cal T}_B$ does not change the degree of the vertices of 
$F$ and does not create any bad face, $[F,T'']$ is indeed a biconnected 
instance that satisfies the two conditions of Lemma~\ref{lemma:triangulation}. 
Thus, we can augment it to an inner-triangulated instance $[F,T^{\Delta}]$, with 
$T^{\Delta}=T''$ by adding further \emph{triangulation edges}.
We state two important lemmas about $[F,T^{\Delta}]$.

\begin{lemma}\label{lemma:petaledgesandbigfaceedges}
	Let $e$=$(b,v)$ be an edge of $E_P \cup E_B$, where $b \in T$ and $v \in F$. Then, 
	either $e$ is a triangulation edge in $[F,T^{\Delta}]$ or $b$ 
	belongs to a tree component $T_c$ of ${\cal T}_B$ sharing a cut-vertex 
	$c$ with $[F,T'']$. In the latter case, $(v,c)$ is a 
	triangulation edge in $[F,T^{\Delta}]$.
\end{lemma}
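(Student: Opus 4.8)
The plan is to prove Lemma~\ref{lemma:petaledgesandbigfaceedges} by carefully tracking, for each edge $e=(b,v) \in E_P \cup E_B$, what happens to its endpoints during the transformation $[F,T] \to [F,T'=T] \to [F,T'' \subseteq T'] \to [F,T^\Delta]$. First I would split into the two cases $e \in E_P$ and $e \in E_B$. For $e \in E_P$: by construction $E_P$ consists of edges shared by two adjacent petal faces incident to the same vertex $v\in F$, which were removed to merge all petal faces at $v$ into a single one. Hence in $[F,T']$ the vertex $v$ lies on one merged petal face whose $T$-side boundary is a path containing $b$. When we triangulate a petal face in the proof of Lemma~\ref{lemma:triangulation} (the case $f=v,b_1,\dots,b_l$), we add edges $(v,b_i)$ for all $2\le i\le l-1$, i.e. $v$ becomes connected to \emph{every} vertex of $T$ on the merged petal face's boundary. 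So either $b$ is one of these interior $b_i$ and $(v,b)$ is exactly a triangulation edge, or $b$ is an endpoint $b_1$ or $b_l$ of that $T$-path — but those endpoints are precisely the vertices that are also cut-vertices attaching tree components, or remain in $[F,T'']$ in which case $(v,b)$ was never deleted or was re-added. The delicate bookkeeping is to argue that if $b$ got separated from the block component, it must lie in a tree component $T_c$ whose cut-vertex $c$ is exactly the endpoint of the merged petal path, and then the petal-triangulation adds $(v,c)$.

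For $e \in E_B$: recall $E_B$ are the edges $e=(v_1,b_l)$ removed when handling a bad face $f=v_1,v_2,b_1,\dots,b_l$, merging $f$ with the neighboring face $g$ (a petal or big face) into $f'$. In the petal-face case for $g$ we add $(v_1,b_1)$, splitting $f'$ into a petal face $v_1,b_1,\dots,b_l$ and the triangle $v_1,v_2,b_1$; in the big-face case for $g$ we add $(v_1,c_h)$ and $(v_1,b_1)$, producing a small face, a petal face $v_1,b_1,\dots,b_l$, and the triangle $v_1,v_2,b_1$. In both cases the vertices $b_1,\dots,b_l$ (and $b_l=c_1$) now lie on a petal face incident to $v_1$, so the subsequent petal-merging and petal-triangulation of Lemma~\ref{lemma:triangulation} will again connect $v_1$ to every $b_i$ on this petal boundary — so $e=(v_1,b_l)$ reappears as a triangulation edge, \emph{unless} $b_l$ has meanwhile been placed into a tree component. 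I would then argue, exactly as in the $E_P$ case, that if $b_l$ is separated into a tree component $T_c$ with cut-vertex $c$, then $c$ is the vertex of that petal path closest to the block component, so the triangulation of the petal face yields $(v_1,c)$.

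The unifying observation I would isolate first, before the case analysis, is this: after Part~A removes $E_P \cup E_B$ and performs the bad-face splits, every vertex $b$ that is an endpoint of a removed edge $e=(b,v)\in E_P\cup E_B$ lies on a petal face incident to $v$ in $[F,T']$ (this is immediate from how petal edges and big-face edges are defined and from the bad-face split construction). Then, whatever subsequent petal-merging happens at $v$, vertex $b$ stays on the single merged petal face at $v$ (merging only deletes edges, enlarging petal faces), so $b$ is on the $T$-path bounding that merged petal face. Now either the whole $T$-path survives into $[F,T'']$, in which case the petal-triangulation case of Lemma~\ref{lemma:triangulation} adds $(v,b_i)$ for all interior $b_i$, giving $(v,b)$ as a triangulation edge (the endpoints $b_1,b_l$ of the path are either $w_1$- or $w_m$-type boundary vertices that keep their original $F$-edges, so $(v,b)$ persists); or a suffix/prefix of that path is cut off as a tree component $T_c$, whose cut-vertex $c$ is the last surviving vertex of the path, and then $(v,c)$ is added by the petal-triangulation, which is the claimed conclusion.

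The main obstacle I expect is the precise identification of the cut-vertex $c$ with the correct vertex of the merged petal $T$-path, and ruling out the possibility that $b$ is separated from the block component along a path that does not pass through a single well-defined cut-vertex on this petal face — i.e. confirming that the tree components ${\cal T}_B$ attach to the block component exactly at the endpoints of these merged petal $T$-paths and nowhere in their interior. This requires using biconnectivity of $[F,T'']$ together with the fact that the interior vertices $b_i$ ($2\le i\le l-1$) of a petal $T$-path have $v$ as their only $F$-neighbor, so they cannot be cut-vertices; hence the cut-vertex, if any, is an endpoint $b_1$ or $b_l$, and $(v,b_1)$ or $(v,b_l)$ — equivalently $(v,c)$ — is added during triangulation. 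I would also need to double-check the boundary interaction with the bad-face-split triangulation edges $(v_1,c_h)$ to make sure no edge of $E_B$ is left in a big face that Lemma~\ref{lemma:triangulation} does not reach, but since the splits are explicitly designed to leave only petal, small, and triangular faces, this is routine.
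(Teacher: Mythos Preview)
Your overall strategy coincides with the paper's: both hinge on the observation that after removing $E_P\cup E_B$ (and adding the bad-face dummy edges), any endpoint $b$ of a deleted edge $e=(b,v)$ lies on a petal face incident to $v$ in $[F,T']$, so that the petal-face case of Lemma~\ref{lemma:triangulation} restores $(v,b)$ when $b\in T''$, and restores $(v,c)$ when $b$ falls into a tree component $T_c$. The paper's proof simply asserts that $c$ lies on that same face; you try to justify this more carefully, which is the right instinct.

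However, your justification of the cut-vertex case contains a genuine error. You claim that interior vertices of the petal boundary ``have $v$ as their only $F$-neighbour, so they cannot be cut-vertices; hence the cut-vertex, if any, is an endpoint $b_1$ or $b_l$.'' This inference is wrong on two counts. First, having lost all $F$-edges does not prevent a vertex from being a cut-vertex: an interior vertex on the simple $b_1$--$b_l$ path is in the block component (it lies on the cycle through $v$), yet it can carry a dangling subtree. Concretely, take $T$ the path $a_1\!-\!a_2\!-\!a_3$ with a leaf $a_4$ attached at $a_2$, and let $v$ be adjacent to $a_1,a_4,a_3$; then $(v,a_4)\in E_P$, and after its removal $a_4$ is a tree component with cut-vertex $c=a_2$, an \emph{interior} vertex of the petal path. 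Second, even when your conclusion $c\in\{b_1,b_l\}$ happened to hold, the edges $(v,b_1),(v,b_l)$ are original edges of $[F,T']$, not triangulation edges, so your final sentence would not deliver what the lemma asks.

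The clean fix (which is what the paper's one-line appeal to ``the same arguments'' really needs) is to avoid locating $c$ altogether. Observe that no vertex of $T_c\setminus\{c\}$ has an $F$-edge in $[F,T']$---any such edge would put it on a cycle through $F$ and hence in the block component. Therefore $T_c\setminus\{c\}$ sits entirely inside a single face of $[F,T'']$, namely the face that contains the old petal face $f_v$; this face is still a petal face with $v$ as its only $F$-vertex, and $c$ lies on its boundary because $T_c$ is attached there. The petal-face triangulation then adds $(v,c)$. Your parenthetical about ``$w_1$- or $w_m$-type boundary vertices'' should also be dropped: $b_1,b_l$ are vertices of $T$, not of $F$, and the edges $(v,b_1),(v,b_l)$ simply survive because they never belonged to $E_P\cup E_B$.
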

\begin{proof}
	Suppose that $b \in T''$; we prove that $e$ is a triangulation edge in 
	$[F,T^{\Delta}]$.
	
	If $e \in E_P$, this directly descends from the fact that the algorithm to 
	triangulate a petal face $f$ described in Lemma~\ref{lemma:triangulation} adds a 
	triangulation edge between every vertex of $T$ incident to $f$, including $b$, 
	and the only vertex of $F$ incident to $f$, namely $v$.
	
	If $e \in E_B$, this depends again on the triangulation algorithm of 
	Lemma~\ref{lemma:triangulation} and on the addition of the one or two dummy 
	edges incident to $v$ that is performed when merging the two faces sharing edge 
	$e$. In fact, these dummy edges ensure that there exists a petal face in which 
	$v$ is the only vertex of $F$; then, the same argument as above applies to prove 
	that $v$ is connected to $b$ by a triangulation edge.
	
	Suppose that $b \notin T''$ and let $T_c$ be the tree component such that $b \in 
	T_c$; the fact that there exists a triangulation edge connecting $v$ to $c$ 
	follows from the same arguments as above, since in both cases $v$ is connected 
	by triangulation edges to all the vertices of $T$, including $c$, incident to 
	the same face it is incident to.
	\qed
\end{proof}

\begin{lemma}\label{lemma:singletarget}
	Let $T_c \in {\cal T}_B$ be a tree component such that there exists at least an 
	edge $(b,v) \in E_P \cup E_B$, with $b \in T_c$ and $v \in F$. Then, for 
	each edge in $E_P \cup E_B$ with an endvertex belonging to $T_c$, the other 
	endvertex is $v$.
\end{lemma}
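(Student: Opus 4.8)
The plan is to argue by contradiction, exploiting the planarity of the cycle-tree graph $[F,T]$ together with the fact that $T_c$ is a single connected component (a subtree) that is attached to the block-component only through one cut-vertex $c$. Suppose, for the sake of contradiction, that there are two edges $(b,v)$ and $(b',v')$ in $E_P\cup E_B$ with $b,b'\in T_c$ and $v\neq v'$ in $F$. Recall that every edge of $E_P\cup E_B$ was an edge of the \emph{original} cycle-tree graph $[F,T]$ connecting a vertex of $T$ to a vertex of $F$; it was removed precisely in order to merge two faces (two petal faces in the case of $E_P$, or a bad face with an adjacent face in the case of $E_B$). So $(b,v)$ and $(b',v')$ are edges of $[F,T]$, and in the planar embedding of $[F,T]$ they each lie on the boundary of the (bounded) region enclosed by the cycle $F$.

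The key structural observation I would establish first is that, in $[F,T']$, the cut-vertex $c$ separates $T_c$ from $F$: every path in $[F,T']$ from a vertex of $T_c\setminus\{c\}$ to a vertex of $F$ must pass through $c$. Now consider the edges of $E_P\cup E_B$ incident to $T_c$; since these edges were removed to form $[F,T']$, adding them back connects $T_c$ directly to $F$. If two such re-added edges $(b,v)$ and $(b',v')$ had distinct $F$-endpoints $v\neq v'$, then together with the tree path inside $T_c$ between $b$ and $b'$ and a path along the cycle $F$ between $v$ and $v'$, one obtains a cycle $C$ that, in the planar embedding of $[F,T]$, encloses a nonempty region on one side. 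The cut-vertex $c$ of $T_c$ lies on the $T_c$-path between $b$ and $b'$ (or else $T_c$ would not be connected to the block component through $c$ — here I use that $T_c$ is a tree, so there is a unique $b$–$b'$ path, and the block component attaches to it at $c$). But then the block component, being attached at $c$, would have to lie entirely inside or entirely outside $C$; in either case a planarity contradiction arises because $F\subseteq$ block component already supplies vertices on $C$ itself, while the portion of the block component strictly inside the cycle $F$ (there must be one, since $[F,T]$ is connected and $T$ is nonempty) is forced onto the wrong side. A cleaner way to phrase the contradiction: the two edges $(b,v),(b',v')$ together with the interior of $F$ would force $T_c$ to have two ``independent'' attachment points to the outer cycle, contradicting the fact that $T_c$ became a tree component hanging off a \emph{single} cut-vertex $c$ — equivalently, $[F,T']$ would be biconnected through $T_c$, contradicting the definition of tree component.

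The main obstacle, I expect, is making precise \emph{why} the removal of exactly the petal edges $E_P$ and big-face edges $E_B$ forces all original $T$-to-$F$ edges incident to a given tree component to share the same target $v$. The heart of it is the following: a vertex $b\in T_c$ that is incident in $[F,T]$ to some $v\in F$ via an edge of $E_P\cup E_B$ becomes, after the removal of that edge, \emph{separated} from $v$ except through $c$ — otherwise $c$ would not be a cut-vertex. So I would carry out the argument in this order: (i) recall that $E_P\cup E_B\subseteq E([F,T])$ and that each such edge, once removed, is not reinserted when building $[F,T']$; (ii) recall the definition of tree component: $T_c$ shares exactly one cut-vertex $c$ with the block-component of $[F,T']$; (iii) suppose two edges $(b,v),(b',v')\in E_P\cup E_B$ with endpoints in $T_c$ have $v\neq v'$; (iv) observe that $F$ lies in the block-component, so both $v$ and $v'$ are reachable from $c$ within the block-component by paths internally disjoint from $T_c$; (v) combine these with the $b$–$b'$ path in $T_c$ to produce, in $[F,T]$, a subgraph that is biconnected and contains both the relevant part of $T_c$ and $c$ together with two vertices of $F$ — concretely, $b$ (hence $c$, on the $b$–$b'$ path) lies on a cycle not passing through $c$ as a cut-vertex; (vi) conclude that in $[F,T']$, $T_c$ would be $2$-connected to the block-component, contradicting that it is a tree component attached at the single cut-vertex $c$. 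Hence $v=v'$, which is exactly the claim.
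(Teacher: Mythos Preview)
Your argument has a genuine gap at the core step. You want to derive a contradiction from the assumption that two removed edges $(b,v),(b',v')\in E_P\cup E_B$ with $b,b'\in T_c$ have $v\neq v'$, and the contradiction you aim for is that $T_c$ would then be $2$-connected to the block component in $[F,T']$. But the edges $(b,v)$ and $(b',v')$ are precisely the ones that were \emph{removed} when passing from $[F,T]$ to $[F,T']$; they do not exist in $[F,T']$. So the cycle you build in step~(v) lives in $[F,T]$, not in $[F,T']$, and step~(vi) does not follow. There is nothing contradictory about $T_c$ being biconnected to $F$ in $[F,T]$ while being a tree component of $[F,T']$: that is exactly what happens whenever edge removals disconnect a piece. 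Your argument, as written, would apply to \emph{any} edge-removal process that produces tree components, and would ``prove'' that every tree component has removed edges to at most one $F$-vertex---which is false for generic removal rules. The lemma is not a general biconnectivity fact; it holds because of the particular way $E_P$ and $E_B$ are defined via petal and bad faces. (A smaller issue: your claim that $c$ must lie on the $b$--$b'$ path in $T_c$ is also unjustified; $c$ is just the root of the subtree $T_c$ and need not lie on that path.)

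The paper's proof is quite different and uses exactly the structure you ignore. It does a case analysis: if only petal edges are involved, the two extreme edges $e_1,e_2$ from $v$ into $T$ survive in $[F,T']$ (they are each incident to a non-petal face), so after removal all of $T_c$ sits inside a single face of $[F,T']$ whose only $F$-vertex is $v$. If a big-face edge $e=(v,b_l)\in E_B$ is involved, then the procedure that removes $e$ simultaneously \emph{adds} one or two dummy edges incident to $v$ that are themselves incident to small faces and hence are never removed; these surviving edges again fence $T_c$ into a face whose only $F$-vertex is $v$. In both cases the conclusion is that every $E_P\cup E_B$ edge leaving $T_c$ must land on $v$. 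To repair your approach you would need to bring in these surviving or added edges at $v$ and argue planarity against them, at which point you are essentially reproducing the paper's case analysis.
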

\begin{proof}
	First suppose that all the edges in $E_P \cup E_B$ connecting a vertex of $T_c$ 
	to a vertex of $F$, including $e$, belong to $E_P$. Consider the two edges $e_1$ 
	and $e_2$ such that $e_1$ and $e_2$ connect $v$ to vertices of $T$, and all the 
	other edges that connect $v$ to vertices of $T$ lie between $e_1$ and $e_2$ in 
	the circular order of the edges around $v$ in $[F,T]$. Note that, all the edges 
	between $e_1$ and $e_2$ belong to $E_P$, while $e_1$ and $e_2$ do not, as one of 
	the two faces they are incident to is not a petal face. Let $f$ be the face both 
	$e_1$ and $e_2$ are incident to after the removal of all the edges between them. 
	Since all the vertices of $T_c$ are incident to $f$, and since $v$ is the 
	only vertex of $F$ incident to $f$, all the edges of $E_P$ 
	connecting a vertex of $T_c$ to a vertex of $F$ are incident to $v$.
	
	Suppose now that there exists at least an edge of $E_B$ connecting a vertex of 
	$T_c$ to a vertex of $F$. Hence, we can assume that 
	$e \in E_B$. This implies that $e$ is incident to a bad face $f$ and a face $g$ that can be 
	either a petal or a big face.
	
	If $g$ is a petal face, then let $e'=(v,b')$ be the other edge incident to $g$ 
	and to $v$. Since $g$ is a petal face, edge $e'$ belongs neither to $E_P$ nor to 
	$E_B$. Also, let $e''=(v,b'')$ be the dummy edge incident to $v$ added when 
	removing $e$ (the dashed edge in Fig.~\ref{figap:bigfaces-a}). Since, by construction, $e''$ is incident to a small face, it 
	belongs neither to $E_P$ nor to $E_B$, as well. Hence, both $e'$ and $e''$ are 
	edges of $[F,T']$ (and hence of $[F,T'']$) incident to $v$. This implies that 
	all the vertices of $T_c$ are incident to the unique face $g$ of $[F,T']$ to 
	which $e'$ and $e''$ are incident. Since $v$ is the only vertex of $F$ incident 
	to this face, all the edges of $E_P \cup E_B$ connecting a vertex of $T_c$ to a 
	vertex of $F$ are incident to $v$.
	
	If $g$ is a big face, then let $e'=(v,b')$ and $e''=(v,b'')$ be the two edges 
	incident to $v$ added when removing $e$ (the dashed edges in Fig.~\ref{figap:bigfaces-b}). Again, $e'$ and $e''$ belong to neither 
	$E_P$ nor $E_B$, since by construction they are both incident to small faces. 
	The statement follows by the same argument as above.
	\qed
\end{proof}

Performing the above operations for every cycle-tree graph $[F,T]$ yields an 
inner-triangulated $2$-outerplanar graph $[G,{\cal H}^{\Delta}]$, that is the 
outcome of {\bf Part A}. 

We then label $[G,{\cal H}^{\Delta}]$ with the algorithm described in 
Section~\ref{subsection:Labeling} and describe in the following how to extend this labeling to the tree components.
\vspace{0.1 cm}
\newline
\begin{wrapfigure}[10]{r}{26ex}
	\vspace{-3ex}
	\centering\includegraphics[width=0.24\textwidth]{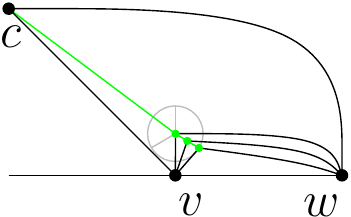}
	\caption{Inserting dummy vertices for a tree-component in face $(c,v,w)$ 
		with $v,w \in F$, $c \in T^\Delta$, $\ell(c) \leq \ell(v)$.} 
	\label{fig:tree-components-a-1}
\end{wrapfigure}
\noindent
{\bf Part B:}
We consider the tree components $T_c \in \mathcal{T}_B$ for each face $F$ of 
$G$; let $[F,T^{\Delta}]$ be the corresponding inner-triangulated cycle-tree graph. 
We label the vertices of $T_c$ and simultaneously augment $[F,T^{\Delta}]$ 
with dummy vertices and edges, so that $[F,T^{\Delta}]$ remains 
inner-triangulated (and 
hence can be embedded, by Lemma~\ref{theorem:2outerplanartriangulated}) 
and the vertices of $T_c$ can be later placed on the petal points 
of the points where dummy vertices are placed. 
The face of $[F,T'']$ to which $T_c$ belongs might 
have been split into several faces of $[F,T^{\Delta}]$ by 
triangulation edges. We assign $T_c$ to any of such faces $f$ that is incident 
to the root $c$ of $T_c$. Then, we label $T_c$ based on the type of $f$; we 
distinguish two cases. 

Suppose $f$ is a triangular face $(c,v,w)$ with $v,w  \in F$ and $c \in T^\Delta$, 
as in Fig.~\ref{fig:tree-components-a-1}; 
assume $\ell(v)<\ell(w)$. We create a path $P_c$ containing $|T_c|-1$ dummy vertices 
and append this path at $c$. Then, we connect every dummy vertex of $P_c$ with 
both $v$ and $w$. If $\ell(c) \leq \ell(v)$, 
then we label the vertices of $P_c$ with $\ell(P_c)=\ell(v)$. If $\ell(c) \geq \ell(w)$, 
then we label them with $\ell(P_c)=\ell(w)$. 

\begin{figure}[t]
	\subfigure[\label{fig:tree-components-b-1}]{\includegraphics[width=0.19\textwidth]{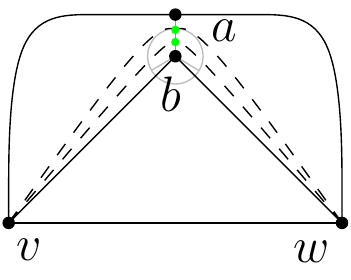}} 
	\hfill
	\subfigure[\label{fig:tree-components-b-2}]{\includegraphics[width=0.19\textwidth]{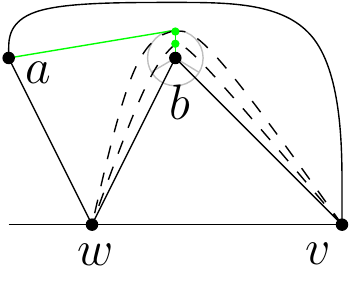}}
	\hfill
	\subfigure[\label{fig:tree-components-b-3}]{\includegraphics[width=0.19\textwidth]{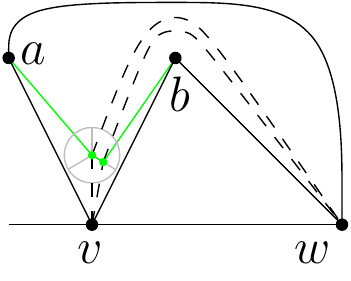}}
	\hfill
	\subfigure[\label{fig:placesfortreecomponents-b}]{\includegraphics[width=0.19\textwidth]{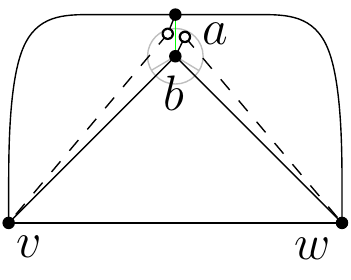}}
	\hfill
	\subfigure[\label{fig:placesfortreecomponents-a}]{\includegraphics[width=0.19\textwidth]{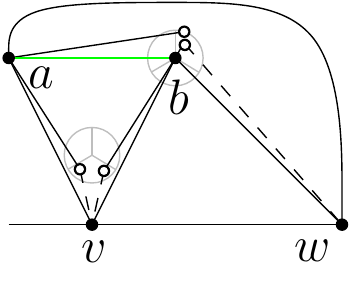}}
	\caption{\subref{fig:tree-components-b-1}--\subref{fig:tree-components-b-3} Inserting dummy vertices for a tree-component in a triangular face $(a,b,v)$ with $v \in F$ and $a,b \in T^\Delta$, when \subref{fig:tree-components-b-1} $\ell(a)=\ell(b)$, \subref{fig:tree-components-b-2} $\ell(a) \neq \ell(b)$ and $\ell(w) < \ell(v)$, and \subref{fig:tree-components-b-3} $\ell(a) \neq \ell(b)$ and $\ell(w) > \ell(v)$. \subref{fig:placesfortreecomponents-b}--\subref{fig:placesfortreecomponents-a} Moving dummy vertices to petal points if $\ell(a) = \ell(b)$ and if $\ell(a) \neq \ell(b)$, respectively.}
	\label{fig:tree-components-b}
\end{figure}

Suppose $f$ is a triangular face $(a,b,v)$ with $v \in F$ and $a,b \in 
T^\Delta$, refer to Fig.~\ref{fig:tree-components-b}; assume $\ell(a) \leq \ell(b)$. 
%
Replace edge $(a,b)$ with a path $P_c$ between $a$ and $b$ with $|T_c|-1$ 
internal dummy vertices, and connect each of them to $v$ and to 
$w$, where $w$ is the other vertex of $F$ adjacent to both $a$ and $b$.
For each dummy vertex $x$ of $P_c$, we assign $\ell(x) = \ell(a)$ if $\ell(v) \leq \ell(a)$; 
we assign $\ell(x) = \ell(b)$ if $\ell(v) \geq \ell(b)$; and we assign $\ell(x) = \ell(v)$ if 
$\ell(a) < \ell(v) < \ell(b)$.
The existence of edge $(a,b) \in T^\Delta$ implies that either $a$ is 
the parent of $b$ in $T^\Delta$ or vice versa. Suppose the former, the other 
case is analogous. Then, $v$ and $w$ 
are the extremal neighbors of $b$ in $F$, and thus either $\ell(v) \leq \ell(b) \leq \ell(w)$ 
or $\ell(w) \leq \ell(b) \leq \ell(v)$. 
Also, if $\ell(a) \neq \ell(b)$, then the label of $a$ does not lie strictly between 
those of $v$ and $w$. In fact, this can only happen if the label of $b$ 
strictly lies between those of $v$ and $w$, and $\ell(a)=\ell(b)$ (which happens only 
if $a$ is a non-fork vertex). Since $\ell(a) \leq \ell(b)$, by assumption, this 
implies that $\ell(a) \leq \ell(v),\ell(w)$.
The two observations before can be combined to conclude that, if $\ell(a)=\ell(b)$, 
then all the tree components lying inside faces $(a,b,v)$ and $(a,b,w)$ have the 
same label as $a$ and $b$ (Fig.~\ref{fig:tree-components-b-1}). Otherwise, 
either the tree components inside $(a,b,v)$ have label $\ell(b)$ and those inside 
$(a,b,w)$ have label $\ell(w)$ (Fig.~\ref{fig:tree-components-b-2}), or the tree 
components inside $(a,b,v)$ have label $\ell(v)$ and those inside $(a,b,w)$ have 
label $\ell(b)$ (Fig.~\ref{fig:tree-components-b-3}).

\noindent All added edges connecting a dummy vertex to $v$ and $w$ are 
again \emph{triangulation edges}.

We apply {\bf Part B} to every cycle-tree graph 
$[F,T^\Delta]$ of $[G,{\cal H}^\Delta]$, hence creating an 
inner-triangulated $2$-outerplanar graph $[G,{\cal H}^{\cal A}]$ where 
${\cal H}^{\cal A}$ is a forest. 
Since all the dummy vertices of $P_c$ are connected to two vertices $v,w \in F$, 
they become non-fork vertices.
Note that the labeling of the dummy vertices coincides with the 
one that would have been obtained by algorithm in Section~\ref{subsection:Labeling}, 
except for the case when $f$ is a triangular face $(a,b,v)$ with $v 
\in F$ and $a,b \in T^\Delta$, and $\ell(a) < \ell(v) < \ell(b)$. In this case, indeed, 
the algorithm would have assigned to $P_c$ label either 
$\ell(a)$ or $\ell(b)$, depending on whether $b$ is the parent of $a$ or vice versa. 
However, the fact that $\ell(a) < \ell(v) < \ell(b)$ holds in $[F,T^\Delta]$,
and the fact that $(a,b,v)$ is a triangular face of $[F,T^\Delta]$ imply that no vertex of 
$[F,T^\Delta]$ different from $v$ has been assigned the same label as $v$.
From these two observations we conclude that the restricted subgraph $H_i$ of $[G,{\cal H}^{\cal A}]$ for each $i$ is a tree with at most one vertex of degree larger than $2$, which has degree $3$. 
We thus apply Lemma~\ref{theorem:2outerplanartriangulated} to 
obtain a planar embedding $\Gamma^{\cal A}$ of $[G,{\cal H}^{\cal A}]$ on $S$.

\subsection{Transformation of the Embedding}
\label{subsection:reverting}

We remove the all the triangulation edges added in the construction, and then restore each tree component $T_c$, which is represented by path $P_c$. Since the vertices of $P_c$ are non-fork vertices and have the same label $i$, by construction, they are 
placed on the same segment $s \in \{s^+, s^N, s^-\}$ of $S_j$, where 
$p_j$ is the point vertex $v_i$ is placed on. 

We remove all the internal edges of $P_c$ and move each vertex $x$ of $P_c$ from 
the point $p$ of $s$ it lies on to one of the corresponding petal points, either 
$l(p)$ or $r(p)$, as follows. 
Let $v$ be a vertex of $G$ connected to a vertex of $T_c$ by an edge in $E_P 
\cup E_B$, if any; recall that, by Lemma~\ref{lemma:singletarget}, all the edges 
of $E_P \cup E_B$ connecting $T_c$ to $G$ are incident to $v$. 
If $\ell(x) < \ell(v)$, then move $x$ to $r(p)$; tree components connected to $w$ in 
Fig.~\ref{fig:placesfortreecomponents-b} and~\ref{fig:placesfortreecomponents-a}. If $\ell(x) > \ell(v)$, then move $x$ to $l(p)$; tree component connected to $v$ in 
Fig.~\ref{fig:placesfortreecomponents-a}.  
Otherwise, $\ell(x) = \ell(v)$; in this case $s \neq s^N$, by construction, and hence we have to 
distinguish the following two cases: If $s = s^+$, then move $x$ to $l(p)$, otherwise move $x$ to 
$r(p)$ (tree components attached to $a$ and $b$, respectively, and connected to $v$ in 
Fig.~\ref{fig:placesfortreecomponents-a}). 
If no vertex $v \in G$ is connected to $T_c$, then move $x$ to $r(p)$ if 
$\ell(c) < \ell(x)$ (tree component attached to $a$ in 
Fig.~\ref{fig:placesfortreecomponents-a}), and to $l(p)$ otherwise.

We prove that this operations maintain planarity. The internal edges of $T_c$ do not cross since the petal points, together with the point where $c$ lies, form a convex point set, on which it is possible to construct a planar embedding of every tree~\cite{BinucciGDEFKL10}. As for the edges connecting vertices of $T_c$ to $v$, by 
Lemma~\ref{lemma:petaledgesandbigfaceedges}, $v$ has visibility to the root $c$ 
of $T_c$, since $(v,c)$ is a triangulation edge; by 
Property~\ref{property:visibility}, this visibility from $v$ extends to all the segment $s$ where $P_c$ had been placed on; and by the construction of $S^*$, to all the corresponding petal points.
Hence, we only have to prove that the edges $(a,b)$ that had been subdivided into a path $P_c$ when merging tree component $T_c$ (green edges in Fig.~\ref{fig:placesfortreecomponents-b} and~\ref{fig:placesfortreecomponents-a}) can be reinserted without introducing any crossing. Namely, let $v$ and $w$ be the two vertices of $G$ that are connected to both $a$ and $b$. Recall that all the subdivision vertices of $(a,b)$ correspond to vertices of tree components belonging to faces $(a,b,v)$ and $(a,b,w)$.
If $\ell(a)=\ell(b)$ (see Fig.~\ref{fig:placesfortreecomponents-b}), then for each 
tree component $T_c$ belonging to face either $(a,b,v)$ or $(a,b,w)$, the 
vertices of $P_c$ lie on the segment $s^N$ corresponding to $\ell(a)=\ell(b)$, by 
construction, since they are non-fork vertices on the path between $a$ and $b$ 
and have label $\ell(a)=\ell(b)$. Also, both $a$ and $b$ lie on $s^N$, possibly at its extremal points. Since, by construction, all the 
tree components that are connected to $v$ (to $w$) through edges of $E_P \cup 
E_B$ are moved to petal points lying inside triangle $\triangle(a,b,v)$ (triangle 
$\triangle(a,b,w)$), and since no tree component stays on $s^N$, edge $(a,b)$ does 
not cross any edge.
If $\ell(a) \neq \ell(b)$, 
the fact that edge $(a,b)$ does not cross any edge 
again depends on the labels we assigned to the tree 
components belonging to faces $(a,b,v)$ and $(a,b,w)$.
Namely, assume that $\ell(a) < \ell(b)$ and that $a$ is the parent of $b$ (see 
Fig.~\ref{fig:placesfortreecomponents-a}), the other cases being analogous. As 
observed above, either the tree components belonging to $(a,b,v)$ have label 
$\ell(b)$ and those belonging to $(a,b,w)$ have label $\ell(w)$, or the tree 
components belonging to $(a,b,v)$ have label $\ell(v)$ and those belonging to 
$(a,b,w)$ have label either $\ell(b)$.
We prove the claim in the latter case (as in the figure), the other being 
analogous. Note that, for each tree component $T_c$ belonging to face $(a,b,w)$, 
all the vertices of $P_c$ lie on the segment $s^N$ corresponding to $\ell(b)$, by 
construction, since they are non-fork vertices on the path between $a$ and $b$ 
and have label $\ell(b)$. Hence, Property~\ref{property:visibility} ensures that 
they lie inside triangle $\triangle(a,b,w)$, which implies that the corresponding 
petal points lie inside $\triangle(a,b,w)$, as well. The fact that the tree 
components $T_c$ lying inside face $(a,b,v)$ are also placed on petal points 
lying inside triangle $\triangle(a,b,v)$ trivially follows from the fact that the 
vertices of $P_c$ have label $\ell(v)$.

To complete the transformation it remains to insert the edges of $E_P \cup E_B$ 
which were not inserted in the previous step. Since by Lemma~\ref{lemma:petaledgesandbigfaceedges} 
all of these edges were also triangulation edges, their insertion does not produce any crossing.

\begin{lemma}\label{theorem:2cycleouterplanar}
	There exists a universal point set of size $O(n^{3/2})$ for the class of 
	$n$-vertex $2$-outerplanar graphs $[G,{\cal H}]$ where ${\cal H}$ is a forest.
\end{lemma}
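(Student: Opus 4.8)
The plan is to use as the point set the set $S^*$ of size $(\sqrt{n}-1)(9n+1)+(n-1)(9\sqrt{n}+1) = O(n^{3/2})$ constructed in Section~\ref{subsection:extension}, and to show that it is universal for the stated class. Since $S^*$ contains a copy of the point set $S$ of Section~\ref{section:triangulatedForests} together with the petal points, the strategy is: reduce a general instance $[G,{\cal H}]$ to an inner-triangulated one embeddable by Lemma~\ref{theorem:2outerplanartriangulated} on $S$, and then recover the removed pieces on the petal points.

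Concretely, I would process $[G,{\cal H}]$ face by face. In \textbf{Part A}, for each face $F$ of $G$ and the tree $T\subseteq{\cal H}$ inside it, I delete the petal edges $E_P$ and big-face edges $E_B$ and reroute with dummy edges so that no bad face survives and every vertex of $F$ has degree at most four; then I delete the tree components ${\cal T}_B$. By Lemma~\ref{lemma:triangulation}, the resulting biconnected cycle-tree graph $[F,T'']$ augments without multiple edges to an inner-triangulated $[F,T^\Delta]$, and I label $[G,{\cal H}^\Delta]$ by the algorithm of Section~\ref{subsection:Labeling}. In \textbf{Part B}, I reinsert each component $T_c$ as a path $P_c$ of $|T_c|-1$ dummy vertices joined to the (at most) two vertices of $F$ incident to its host face, with the labels prescribed in Section~\ref{subsection:replacement}. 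This keeps the graph inner-triangulated with a forest inner level and, by the two observations preceding the lemma, keeps every restricted subgraph $H_i$ a tree of maximum degree $2$ except for at most one vertex of degree $3$, i.e.\ the hypothesis of Lemma~\ref{lem:inducedsubtree}. Since the number of vertices removed in Part A equals the number reinserted in Part B, the graph $[G,{\cal H}^{\cal A}]$ still has $n$ vertices, so Lemma~\ref{theorem:2outerplanartriangulated} produces a planar straight-line embedding $\Gamma^{\cal A}$ of it on $S\subseteq S^*$.

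The final and most delicate step is \textbf{Part C}, turning $\Gamma^{\cal A}$ into an embedding of $[G,{\cal H}]$ on $S^*$. I delete all triangulation edges. For each component $T_c$, its path $P_c$ lies, by construction, on a single segment $s\in\{s^+_j,s^N_j,s^-_j\}$ of the point set $S_j$ where $v_{\ell(P_c)}$ is placed; I delete the internal edges of $P_c$ and move its vertices to the petal points $l(p)$ or $r(p)$ following the case distinction on $\ell(x)$ versus $\ell(v)$, where $v$ is the unique vertex of $G$ to which $T_c$ attaches via $E_P\cup E_B$ (uniqueness is Lemma~\ref{lemma:singletarget}), or on $\ell(c)$ versus $\ell(x)$ when no such $v$ exists. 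Then $T_c$ is drawn inside the convex region spanned by $c$ and its petal points, which is planar because every tree admits a planar straight-line drawing on a convex point set~\cite{BinucciGDEFKL10}. It remains to reinsert without crossings (i) the subdivided edges $(a,b)$ and (ii) the edges of $E_P\cup E_B$. For (i), the labels chosen in Part B force, after the move, the tree components of the faces $(a,b,v)$ and $(a,b,w)$ to land inside $\triangle(a,b,v)$ and $\triangle(a,b,w)$ respectively (Property~\ref{property:visibility}), leaving the segment between $a$ and $b$ free. For (ii), Lemma~\ref{lemma:petaledgesandbigfaceedges} ensures each such edge is a triangulation edge (or that $(v,c)$ is), so $v$ sees $c$ in $\Gamma^{\cal A}$; Property~\ref{property:visibility} extends this visibility along $s$, and the construction of $S^*$ extends it to the petal points on $s$. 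The result is a planar straight-line embedding of $[G,{\cal H}]$ on $S^*$.

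I expect the main obstacle to be part (i): arranging already at labeling time that $\triangle(a,b,v)$ and $\triangle(a,b,w)$ absorb exactly the tree components destined for them, so that the reinserted edge $(a,b)$ has an empty corridor. This is why the labels of the dummy paths are pinned down by $\ell(a)$, $\ell(b)$, or the separating label of $v$ rather than chosen freely as in Section~\ref{subsection:Labeling}, and it is the only place where the argument genuinely departs from the inner-triangulated case.
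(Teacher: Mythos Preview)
Your proposal is correct and follows essentially the same approach as the paper: the point set $S^*$ of Section~\ref{subsection:extension}, the three-part reduction (Part~A remove $E_P\cup E_B$ and tree components, triangulate via Lemma~\ref{lemma:triangulation}; Part~B reinsert each $T_c$ as a dummy path with the prescribed labels; Part~C move the dummy paths to petal points and reinsert $(a,b)$ and $E_P\cup E_B$), and the supporting roles of Lemmas~\ref{lemma:petaledgesandbigfaceedges} and~\ref{lemma:singletarget} together with Property~\ref{property:visibility} are exactly what the paper does in Sections~\ref{subsection:replacement}--\ref{subsection:reverting}. Your identification of the $(a,b)$ reinsertion as the delicate point, and the reason the dummy labels are pinned rather than free, also matches the paper's emphasis.
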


\section{General $2$-Outerplanar Graphs}\label{section:contraction}

In this section we extend the result of Lemma~\ref{theorem:2cycleouterplanar} to 
any arbitrary $2$-outerplanar graph $[G,{\cal H}]$. 

We first give a high-level description of the algorithm and then go into details. 
The main idea is to convert every graph $G_h \in {\cal H}$ lying in a face $F=F_h$ of $G$ into a tree $T_h$; embed 
the resulting graph on $S^*$; and finally revert the conversion from each $T_h$ 
to $G_h$. Each tree $T_h$ is created by 
substituting each biconnected block $B$ of $G_h$ by a star, 
which is centered at a dummy vertex and has a leaf for each vertex of $B$, where 
leaves shared by more stars are identified with each other. This results in a 
$2$-outerplanar graph whose inner level is a forest. 

The embedding of this graph on $S^*$ is performed similarly as in 
Lemma~\ref{theorem:2cycleouterplanar}, with some slight modifications to the 
labeling algorithm, especially for the vertices of $T_h$ corresponding to 
cut-vertices of $G_h$, and to the procedure for merging the tree components. 
These modifications allow us to ensure that the leaves of each star composing $T_h$, 
and hence the vertices of each block of $G_h$, lie on a portion of 
$S^*$ determining a convex point set, where they can thus be drawn without 
crossings~\cite{comgeo/Bose02,GritzmannMPP91}.

We now describe the arguments more in detail, starting by giving some definitions. 
We say that a cut-vertex of $G_h$ is a \emph{c-vertex}, and that the vertices and the edges of a block $B$ of $G_h$ are its \emph{block vertices}, denoted by $N_B$, and its \emph{block edges}, denoted by $E_{BL} \subseteq N_B \times N_B$, respectively. 
Now we transform graph $[F,G_h]$ into a cycle-tree graph $[F,T]$ as follows: For each block $B$ of $G_h$, we remove all its block edges $E_{BL}$ and insert a \emph{b-vertex} $b$ representing $B$; also, we insert edges $(b,b')$ for every vertex $b' \in N_B$. In other words, we replace each block $B$ with a star whose center is a new vertex $b$ and whose leaves are the vertices in $N_B$. This results in transforming $G_h$ into a tree $T$ obtained by attaching the stars through the identification of leaves corresponding to c-vertices. When performing the transformation, we start from the given planar embedding $\Gamma$ of $[G,{\cal H}]$, which naturally induces a planar embedding $\Gamma'$ of each resulting cycle-tree graph $[F,T]$. 

We apply the operations described in {\bf Part A} of Section~\ref{subsection:replacement} (delete petal and big-face edges, remove tree components, and triangulate) to make $[F,T]$ inner-triangulated, and then label it as in Section~\ref{subsection:Labeling}. 
We will then relabel some of the c-vertices and perform the merging of the tree components in a special way, slightly different from the one described in {\bf Part B}, so that the embedding of the resulting graph will satisfy some additional geometric properties that will allow us to restore the original blocks of $G_h$ when performing {\bf Part C}. 

Let $w_1,\dots,w_m$ be the vertices of $F$ in the order defined by the labeling, and let $r$ be the root of $T$; recall that, since the root is a fork vertex, it is independent of where the tree components, which become non-fork vertices, are merged. We give some additional definition.
For a b-vertex $b$ we define two particular vertices, called its \emph{opener} and the \emph{closer}, that will play a special role in the merging of the tree components incident to $b$. 
If $b \neq r$ and $b$ is not adjacent to $r$, then the opener of $b$ is the c-vertex $c$ that is the parent of $b$ in $T$.
If $b = r$ (see Fig.~\ref{fig:rootblocks-a}), then the opener of $b$ is the c-vertex $c$ adjacent to $b$, $w_1$, and $w_m$, such that $3$-cycle $(c,w_1,w_m)$ does not contain in its interior any c-vertex with the same property as $c$ in $\Gamma'$.
If $b$ is adjacent to $r$, then the opener of $b$ is $r$; note that, in this way we treat $r$ as a c-vertex even when it is not a cut-vertex of $G_h$.
For a b-vertex $b$ with opener $c$, the \emph{closer} of $b$ is the first block vertex following (the last preceding) $c$ in the rotation at $b$ in $\Gamma'$, if $\ell(c) < \ell(b)$ (if $\ell(c) \geq \ell(b)$); note that, the closer always exists since $b$ has at least two neighbors that are not incident to $F$.

Some blocks of $G_h$, and the corresponding b-vertices of $T$, have to be treated in a special way because of their relationship with the root $r$ of $T$. Let $c$ be the opener of a b-vertex $b$ such that $N_B \cup \{b\}$ contains $r$, where $B$ is the block of $G_h$ corresponding to $b$. We call \emph{root-blocks} the set of blocks lying in the interior of $3$-cycle $(c,w_1,w_m)$ in $\Gamma'$. If $c$ is a non-fork vertex, the presence of root-blocks might create problems in the algorithm we are going to describe later; hence, in this case, we change the embedding $\Gamma'$ slightly (cf. Figure~\ref{fig:rootblocks-a}) by rerouting edge $(w_m,b)$ so that root-blocks do not exist any longer. This change of embedding consists of swapping edges $(b,c)$ and $(w_m,b)$ in the rotation at $b$. Note that edge $(w_m,b)$ does not belong to $[F,G_h]$, which implies that embedding $\Gamma$ has not been changed. In order to maintain planarity, we have to remove all the edges connecting $w_1$ to root-blocks, as otherwise they would cross edge $(w_m,b)$; however, the fact that $(w_m,b)$ does not belong to $[F,G_h]$, together with a visibility property between $w_1$ and the root-blocks that we will prove in Lemma~\ref{lemma:correctorder}, will make it possible to add the removed edges at the end of the construction without introducing any crossing.

\begin{figure}[tb]
	\centering
	\subfigure[\label{fig:rootblocks-a}]{\includegraphics[width=0.37\textwidth]{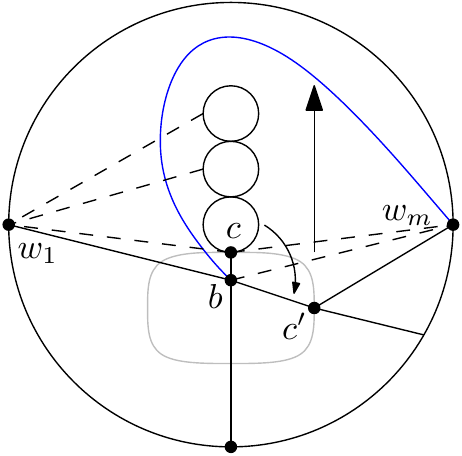}}
	\hfil
	\subfigure[\label{fig:rootblocks-b}]{\includegraphics[width=0.37\textwidth]{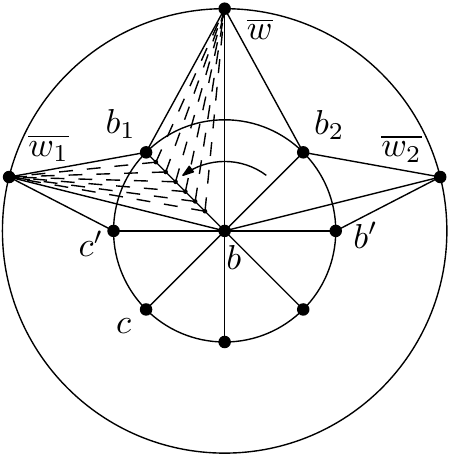}}
	\caption{\subref{fig:rootblocks-a} Rerouting edge $(w_m,b)$ to eliminate root-blocks when the opener $c$ of the block containing the root is a non-fork vertex. \subref{fig:rootblocks-b} Illustration for the rule ``choice of Faces''.} 
	\label{fig:rootblocks}
\end{figure}

We now describe the part of the algorithm that differs from the one described in Section~\ref{section:Forests}.

First, we change the labeling of each c-vertex $c$ that is a branch vertex of $T$. Namely, consider the two fork vertices $a$ and $d$ such that the subpath of $T$ between $a$ and $d$ contains $c$ and does not contain any other fork vertex, with $a$ being closer to the root than $d$. Let $v$ and $w$ be the two neighbors of $c$ in $F$; assume $\ell(w) < \ell(v)$. Note that, as described in \textbf{Part B} of Section~\ref{subsection:replacement}, we have either $\ell(w) < \ell(d) < \ell(v) \leq \ell(a)$ or $\ell(a) \leq \ell(w) < \ell(d) < \ell(v)$. In the first case, we relabel $c$ by setting $\ell(c) = \ell(v)$, otherwise we set $\ell(c) = \ell(w)$. Observe that this is analogous to considering $c$ as a tree component and applying for it the labeling algorithm in Section~\ref{subsection:replacement}. This observation allows us to state that the same arguments as in Lemma~\ref{lem:inducedsubtree} can be used to prove that the restricted subgraph $H_i$ of $G_h$, for each $i = 1, \dots, |G|$, maintains the same property even after the relabeling of $c$.

Then, we describe a procedure, that we call \textbf{Part B'} as it coincides with \textbf{Part B} of Section~\ref{subsection:replacement}, except for the choice of the face where the tree components are placed and of the edge they are merged to. This choice, that we describe later, is done in such a way that applying \textbf{Part C} of the embedding algorithm described in Lemma~\ref{theorem:2cycleouterplanar} yields an embedding $\Gamma^*$ of $[F,T]$ on $S^*$ that satisfies the following two properties, which will then allow us to redraw all the blocks of $G_h$: 
\begin{itemize}
	\item the block vertices of every block form a convex region and 
	\item the clockwise order in which the block vertices of every block appear along this convex region coincides with the clockwise order in which they appear along the outer face of the block in the drawing $\Gamma$ of $G$.
\end{itemize}

For ensuring the first item, the following important property derived 
from Property~\ref{property:visibility} is of particular help. 
Refer to Fig.~\ref{fig:convexityprperty}.

\begin{property}\label{property:convex}
	Let $\underline{j}$ and $\overline{j}$ be two integers such that $1 \leq \underline{j} < \overline{j} \leq N$. Then the points of 
	$\bigcup_{j = \underline{j}, \dots, \overline{j}} [s^-_{j} \cup \{p^C_{j}\} \cup s^+_{j}]$ determine a convex point set. This is also true if we replace $s^-_{\underline{j}}$ by $s^N_{\underline{j}}$ and $s^+_{\overline{j}}$ by $s^N_{\overline{j}}$.
\end{property}
\begin{proof}
	First observe that the center points $p_j^C$ of all the point sets between $\underline{j}$ and $\overline{j}$, that is, $\bigcup_{j = \underline{j}, \dots, \overline{j}} [\{p^C_{j}\}]$ are in convex position by construction.
	
	\begin{figure}[tb]
		\begin{center}
			\includegraphics[width=0.6\textwidth]{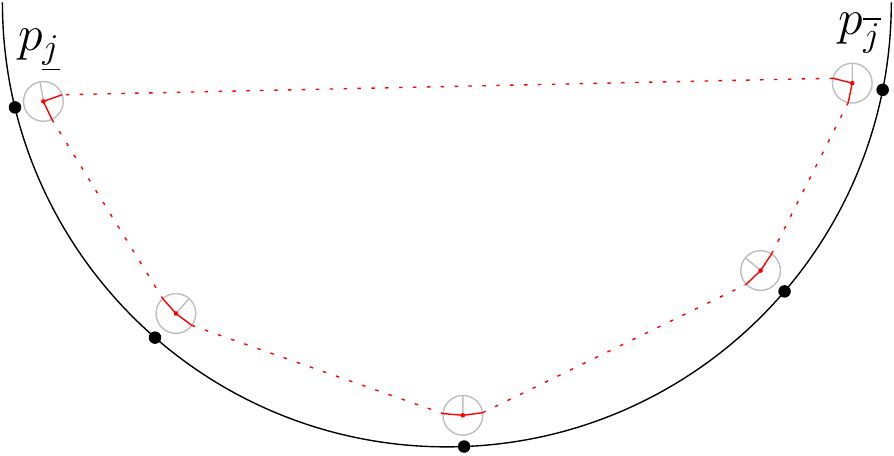}
		\end{center}
		\caption{Illustration for Property~\ref{property:convex}.} 
		\label{fig:convexityprperty}
	\end{figure}
	
	Then, for each $j = \underline{j}, \dots, \overline{j}-1$, segments $s_j^+$ and $s_{j+1}^-$ lie below the segment $s(p^C_{j},p^C_{j+1})$, due to the fact that points $p_j^+$ and $p_{j+1}^-$ lie below points $p_j^1$ on $\pi_j$ and $p_{j+1}^2$ on $\pi_{j+1}$, respectively; see Fig.~\ref{fig:universalpointset}. This implies that the internal angles at $p_j^+$ and $p_{j+1}^-$ are smaller than $180^\circ$. As for the internal angle at each center point $p^C_{j}$, this is still smaller than $180^\circ$ due to the fact that $p_j^+$ and $p_{j}^-$ lie above points $p_j^3$ and $p_j^4$ on $\pi_j$, respectively, which lie on a diameter of $\pi_j$.
	
	The fact that segments either $s^-_{\underline{j}}$ or $s^N_{\underline{j}}$, and either $s^+_{\overline{j}}$ or $s^N_{\overline{j}}$ do not destroy the convexity of the point set again descends from the fact that the internal angles at $p^C_{\underline{j}}$ and at $p^C_{\overline{j}}$ are always smaller than $180^\circ$.
\end{proof}	

The second item can be mostly ensured by choosing an appropriate face for the tree components. In fact, as already noted in Section~\ref{section:Forests}, the triangulation step performed after the removal of tree components splits the face where each tree component used to lie into several faces; while in Section~\ref{section:Forests} the choice among these faces was arbitrary, in this case we have to make a suitable choice, which will be based on the opener and the closer of the block the tree component belongs to. 

{\bf Rule ``choice of Faces'':}

Let $b$ be a b-vertex of a block $B$, and let $c$ and $c'$ be the opener and the closer of $b$, respectively. Also, let $b'$ be the last counterclockwise neighbor of $b$ different from $c$ such that $b' \in N_B$ and $\ell(b')=\ell(b)$ (possibly, $b'=c'$).

Consider any two neighbors $b_1$ and $b_2$ of $b$ such that $b_1,b_2 \in N_B$ and there exists no vertex $b_3 \in N_B$ of $b$ between $b_1$ and $b_2$ in the rotation at $b$. Since $[F,T]$ is inner-triangulated, there exists a vertex $\overline{w} \in F$ that is adjacent to both $b_1$ and $b_2$; also, there exists edge $(b,\overline{w})$, which is a triangulation edge. Hence, each tree component $T_{1,2}$ that used to lie between $b_1$ and $b_2$ has to be placed either inside face $(b,b_1,\overline{w})$ or inside $(b,b_2,\overline{w})$ in order to maintain the embedding of the graph before the triangulation. Finally, let $\overline{w}_1$ and $\overline{w}_2$ be the two vertices of $F$ preceding $b_1$ and following $b_2$ in the rotation at $b$, respectively.

If both $b_1$ and $b_2$ are between $b'$ and $c'$ in the rotation at $b$, then place $T_{1,2}$ inside face $(b,b_2,\overline{w})$ and merge it to edge $(b,b_2)$, that is, subdivide this edge with $|T_{1,2}|$ dummy edges, each connected to $\overline{w}$ and to $\overline{w}_2$; otherwise, place $T_{1,2}$ inside face $(b,b_1,\overline{w})$ and merge it to edge $(b,b_1)$, connecting the subdivision edges to $\overline{w}$ and to $\overline{w}_1$; see Fig.~\ref{fig:rootblocks-b}.

Let $[F,T^*]$ be the cycle-tree graph obtained after all the tree components have been merged. In the following lemma we prove that $[F,T^*]$ admits an embedding on $S^*$ satisfying the required geometric properties.

\begin{lemma}\label{lemma:correctorder}
	There exists an embedding $\Gamma^*$ of $[F,T^*]$ on $S^*$ in which, for each b-vertex $b$ corresponding to a block $B$ of $G_h$, the vertices of $N_B$ are in convex position and appear along this convex region in the same clockwise order as they appear along the outer face of $B$ in the given planar drawing $\Gamma$ of $G$.
\end{lemma}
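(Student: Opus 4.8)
The plan is to build the embedding $\Gamma^*$ by applying Lemma~\ref{theorem:2cycleouterplanar} to the inner-triangulated instance $[F,T^*]$ obtained after \textbf{Part B'}, and then to verify the two geometric claims by tracking where the block vertices of a fixed block $B$ end up on $S^*$. First I would fix a b-vertex $b$ with opener $c$ and closer $c'$, and distinguish the case $\ell(c) < \ell(b)$ from $\ell(c) \geq \ell(b)$ (the two being symmetric); I will treat $\ell(c) < \ell(b)$ as the base case. The key observation is that, by construction of the labeling and of Rule ``choice of Faces'', every block vertex $b' \in N_B$ with $b' \neq c$ receives a label in the contiguous range of labels of $F$ ``spanned'' by $B$, and the tree components hanging off the edge $(b,b')$ are merged so that their dummy paths inherit the label of $b'$ (or of the relevant endpoint of $F$). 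Consequently, after the embedding of Lemma~\ref{theorem:2cycleouterplanar} and the petal-point relocation of \textbf{Part C}, each $b' \in N_B$ lies either on a center point $p_j^C$, or on one of the segments $s^-_j, \{p^C_j\}, s^+_j$, or on a petal point adjacent to such a segment, for $j$ ranging over a contiguous interval $\underline{j}, \dots, \overline{j}$ of the outer points; the extremal vertices $c$ and (the neighbor playing the role dual to) the closer occupy $s^N$-type positions at the two ends, exactly as in the ``also true'' clause of Property~\ref{property:convex}.

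Granting that placement, the first item follows immediately: Property~\ref{property:convex} says the union $\bigcup_{j=\underline{j}}^{\overline{j}} [s^-_j \cup \{p^C_j\} \cup s^+_j]$ (with the two endpoint segments optionally replaced by $s^N$) is in convex position, and the petal points were constructed in Section~\ref{subsection:extension} to lie just outside their segments on convex arcs $\pi^\pm_r$, so adjoining them keeps the whole set convex; hence $N_B$ is in convex position. For the second item I would argue that the clockwise order of $N_B$ along this convex region is determined by three pieces of data: the label order of the vertices (which, by Step~b of Section~\ref{subsection:Embedding}, is the order along $\pi$), the order within a single point set $S_j$ given by item~(A) of Property~\ref{property:visibility}, and, for petal points, whether we chose the $l(\cdot)$ or $r(\cdot)$ copy. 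The choices made in \textbf{Part C} and in Rule ``choice of Faces'' — merging tree components between $b'$ and $c'$ to the edge closer to $c'$, and the rest to the edge closer to the opener side — are precisely engineered so that this induced order is the rotation at $b$ in $\Gamma'$, which in turn equals the clockwise order of $N_B$ along the outer face of $B$ in $\Gamma$, because substituting the block by its star in the $G_h\to T$ transformation did not change the cyclic order of the $b'$'s around $b$.

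The remaining obligation is to check that $\Gamma^*$ is actually a planar straight-line embedding, i.e. that no crossings are introduced beyond what Lemma~\ref{theorem:2cycleouterplanar} already guarantees. Here I would invoke the same machinery as in Section~\ref{subsection:reverting}: internal edges of each merged tree component are fine because petal points plus the root form a convex set admitting a planar tree drawing~\cite{BinucciGDEFKL10}; edges from a tree component to the vertex $v\in F$ it is attached to via $E_P \cup E_B$ are fine because $(v,c)$ is a triangulation edge (Lemma~\ref{lemma:petaledgesandbigfaceedges}) and visibility from $v$ extends over the whole relevant segment and its petal points by Property~\ref{property:visibility} and the construction of $S^*$; and the subdivided edges $(a,b)$ are reinserted as in the forest case. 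The only genuinely new ingredient, and the step I expect to be the main obstacle, is handling the root-blocks and the rerouting of edge $(w_m,b)$: I must show that after swapping $(b,c)$ and $(w_m,b)$ in the rotation at $b$ and deleting the edges from $w_1$ to the root-blocks, the resulting instance is still a legitimate inner-triangulated cycle-tree graph to which the above applies, and — crucially — that the deleted $w_1$-to-root-block edges can be added back at the very end without crossings. For this last point I would establish the required visibility between $w_1$ and each root-block vertex from the convex-position conclusion already obtained (the root-blocks, being inside the $3$-cycle $(c,w_1,w_m)$, land on an $s^N$-adjacent convex portion ``visible'' from $w_1$), mirroring the second ``also true'' clause of Property~\ref{property:convex}; the bookkeeping to confirm that no root-block vertex is separated from $w_1$ by an already-drawn edge is where the argument is most delicate.
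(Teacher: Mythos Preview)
Your overall strategy---apply Lemma~\ref{theorem:2cycleouterplanar} and then invoke Property~\ref{property:convex}---is the same as the paper's, but you are missing a crucial ingredient: the embedding produced by Lemma~\ref{theorem:2cycleouterplanar} does \emph{not} always place the block vertices $N_B$ in convex position. The paper's proof explicitly identifies configurations where convexity fails and repairs them with a local modification it calls a \emph{promotion}. Concretely, suppose $b$ is a non-fork vertex, the opener $c$ is a fork vertex on $p^C_j$, and the closer $c'$ is a non-fork vertex on $s^-_k$ with $j>k$. Then every vertex of $N_B$ except $c$ lies on $s^-_k$, while $c$ sits on $p^C_j$; this set is \emph{not} convex, and Property~\ref{property:convex} does not apply as stated. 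The fix is to move $c'$ to the (empty) center point $p^C_k$ and redistribute the tree-component vertices merged to $(b,c')$ onto $s^N_k$ and $s^+_k$, effectively treating $c'$ as if it were a fork vertex. A similar promotion is needed when $b$ is a fork vertex and $c'$ is non-fork. Without this step your appeal to Property~\ref{property:convex} is unjustified, and the first item of the lemma does not follow.

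A second, smaller gap: your case split is on $\ell(c)$ versus $\ell(b)$, whereas the analysis that actually drives where vertices land is whether $b$ (and then $c'$) is a fork or a non-fork vertex; this determines whether $b$ occupies a center point and which segments the adjacent block vertices populate. Relatedly, your claim that block vertices in $N_f\cup N_{nf}$ end up on petal points is not right---only vertices in $N_{tc}$ (those coming from merged tree components) are moved to petal points in \textbf{Part C}; the rest stay on the ordinary segment points, and it is precisely this mixture that forces the promotion step. Finally, you treat the root-blocks as the hardest case, but in the paper they are the easiest: all their vertices share the label of the fork vertex $c^*$ and land on a single $s^N$ segment (hence on a convex set of petal points), so convexity and order are immediate there.
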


\begin{proof}
	First, construct a straight-line planar embedding $\Gamma''$ of $[F,T^*]$ on $S^*$ by applying Lemma~\ref{theorem:2cycleouterplanar}. 
	
	We will now consider each block $B$ represented by a b-vertex $b$ in $T^*$ and analyze where the vertices $N_B$ are placed in $\Gamma''$ due to \textbf{Part C} of Lemma~\ref{theorem:2cycleouterplanar} and to the rule ``choice of faces'' described in \textbf{Part B'}, proving that the vertices in $N_B$ either already satisfy the required properties or can do so by performing some local changes to $\Gamma''$.
	
	The block vertices $N_B$ consist of the fork vertices $N_{f}$, of the non-fork vertices $N_{tc}$ obtained by merging tree components, and of the other non-fork vertices $N_{nf}$, which are also non-fork vertices of $[F,T]$. Note that sets $N_{f}$, $N_{nf}$, and $N_{tc}$ are disjoint, if we consider the root of a tree component not in $N_{tc}$. 
	
	We start with removing $b$ and its incident edges. Note that, in the local changes we possibly perform, the position of $b$ might be reused by another vertex. As orientation help we sometimes keep $b$ on its point, in particular in illustrations, until all its block vertices have been considered. 
	
	First suppose that $B$ belongs to the root-blocks. Recall that the c-vertex $c^*$ separating the root-blocks from the block containing the root $r$ is a fork vertex, since in the case it was a non-fork vertex we rerouted edge $(w_m,b)$, hence eliminating the root-blocks. Thus, all the vertices of the root-blocks have the same label as $c^*$ and are placed on the $s^N_j$ segment of the point set $S_j$ where $c^*$ is placed. Since each vertex $x$ of $B$ in $N_{tc}$ is moved to a petal point of $s^N_j$ by the algorithm described in {\bf Part C}, and since the petal points of the same segment are in convex position, by construction of $S^*$, the vertices of $B$ satisfy the required properties.
	
	Assume now that $B$ does not belong to the root-blocks. We distinguish two cases, based on whether $b$ is a fork vertex or not. Let $c$ and $c'$ be the opener and the closer of $b$, respectively, and assume $\ell(c) \geq \ell(b)$ (the other case is symmetric). Refer to Fig.~\ref{fig:nonforkblockcentervertex}. Let $j$ and $k$ be the indexes such that $c$ is placed on point set $S_j$ and $c'$ is placed on point set $S_k$.
	
	\begin{figure}[tbh]
		\begin{center}
			\subfigure[\label{fig:nonforkblockcentervertex-a}]{\includegraphics[
				width=0.24\textwidth]{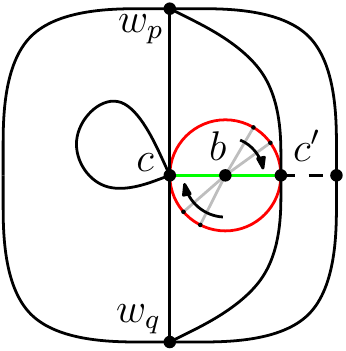}}
			\hfill
			\subfigure[\label{fig:nonforkblockcentervertex-b}]{\includegraphics[
				width=0.24\textwidth]{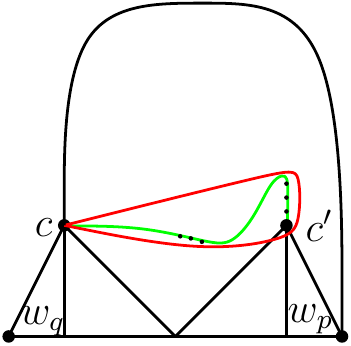}}
			\hfill
			\subfigure[\label{fig:nonforkblockcentervertex-d}]{\includegraphics[
				width=0.24\textwidth]{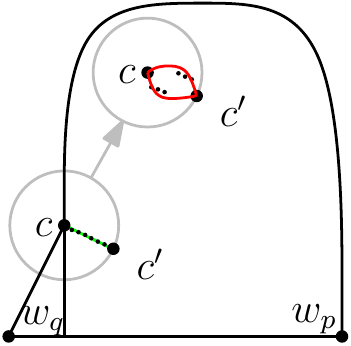}}
			\hfill
			\subfigure[\label{fig:nonforkblockcentervertex-c}]{\includegraphics[
				width=0.24\textwidth]{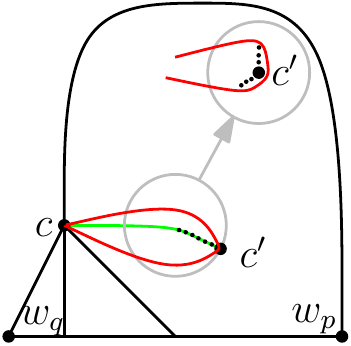}}
			\caption{Illustration when $b$ is non-fork vertex. The red circle indicates the block, 
				tiny black vertices are from tree-components, green edges are tree-edges and the dashed edge is 
				present of $c'$ is fork vertex, otherwise $c'$ is non-fork vertex. 
				The block illustrated in~\subref{fig:nonforkblockcentervertex-a} is placed as in~\subref{fig:nonforkblockcentervertex-b}, 
				if $c'$ is fork vertex. \subref{fig:nonforkblockcentervertex-d} illustrates the case $\ell(c) \neq \ell(c')$ 
				when a promotion of $c'$ is not necessary, while in \subref{fig:nonforkblockcentervertex-c} a promotion is necessary.}
			\label{fig:nonforkblockcentervertex}
		\end{center}
	\end{figure}
	%
	
	
	{\bf Suppose $b$ is a non-fork vertex}, and let $w_p,w_q$ (with $p < q$) be the neighbors of $b$ in $F$. Refer to Fig.~\ref{fig:nonforkblockcentervertex}.
	
	First note that, in this case, $c'$ is the only vertex of $N_B \setminus \{c\}$ belonging to $N_{f} \cup N_{nf}$, that is, all the vertices in $N_B$ different from $c$ and $c'$ belong to some tree components. Also, we have $\ell(c) \geq \ell(x) \geq \ell(c')$ for all $x \in N_B$. See Fig.~\ref{fig:nonforkblockcentervertex-a}.
	
	If $c' \in N_{f}$, then $c'$ is placed on the center point $p_k^C$ of $S_k$, as in Fig.~\ref{fig:nonforkblockcentervertex-b}. We have that the vertices of $N_B$ that have been merged to edge $(b,c')$ are placed on the $s^N_k$ segment of $S_k$, since the algorithm described in {\bf Part C} moved the vertices adjacent to $w_p$ inside triangle $(c,c',w_p)$; also, the vertices of $N_B$ that have been merged to edge $(b,c)$ are placed on the $s^-_l$ segment of a point set $S_l$ such that $k < l \leq j$, since the vertices adjacent to $w_q$ were moved inside triangle $(c,c',w_q)$. Hence, Property~\ref{property:convex} ensures that the vertices of $N_B$ are in convex position. The fact that they appear in the correct order along this convex region depends on the fact that the vertices merged to $(b,c')$, as well as those merged to $(b,c)$, are consecutive along the boundary of $B$.
	
	If $c' \in N_{nf}$, then $c'$ is placed on the $s^-_k$ segment of $S_k$. If $j=k$, as in Fig.~\ref{fig:nonforkblockcentervertex-d}, then $c$ is either on $s^-_k$ or on $p^C_k$; in both cases, the vertices in $N_B$ are on the same segment, and the proof that they satisfy the required properties, after they have been moved to petal points, is the same as for the case of the root-blocks. If $j > k$, as in Fig.~\ref{fig:nonforkblockcentervertex-c}, which can only happen if $c$ is a fork vertex, then all the points of $N_B$, except for $c$, lie on $s^-_k$, while $c$ lies on $p^C_j$. This implies that the region defined by the points of $N_B$ is not convex. We thus need to perform a local change in the placement of these vertices, that we call a \emph{promotion} of $c'$ at $S_k$. This operation places $c'$ on $p^C_k$, and places on $s^N_k$ the vertices of $N_B$ that were merged to $(b,c')$, and on $s^+_k$ the vertices of $N_B$ that were merged to $(b,c)$. Intuitively, this corresponds to ``promoting" $c'$ to become a fork vertex. Note that, no vertex lies on $p^C_k$ before the promotion of $c'$, since there is no fork vertex between $c$ and $c'$ in $T^*$, and this implies that no vertex lies on $s^N_k$ and $s^+_k$, as well. By Property~\ref{property:convex}, the vertices of $N_B$ are now in convex position and in the correct order, as in the case in which $c'$ is a fork vertex.
	
	{\bf Suppose $b$ is a fork vertex}, and let $w_p,w_q$ (with $p < q$) be the two extremal neighbors of $b$ in $F$. Refer to Fig.~\ref{fig:CaseForkFork}.
	
	\begin{figure}[tb]
		\begin{center}
			\includegraphics[width=1.0\textwidth]{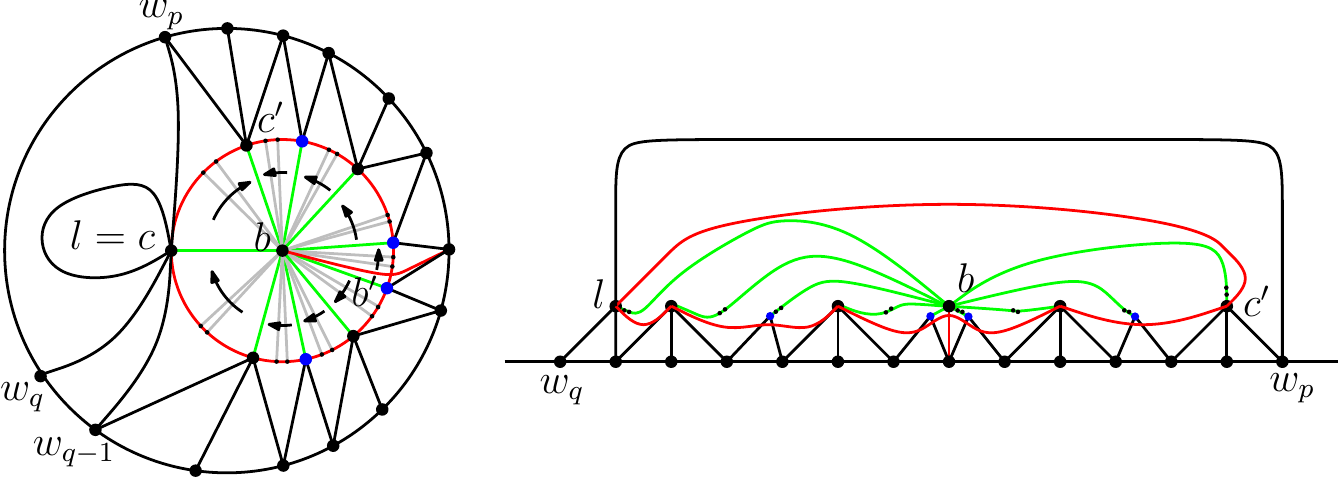}
		\end{center}
		\caption{Illustration when $c$ is fork vertex. The red circle indicates the block, 
			black(blue) vertices on this circle are fork(non-fork)-vertices. Arrows indicate to which 
			edge tree-components are assigned. The right drawing simulates the placement on the point set.} 
		\label{fig:CaseForkFork}
	\end{figure}
	
	Let $a$ be the ancestor of $b$ in $T$ such that $a$ is a fork vertex and there exists no fork vertex in the path of $T^*$ between $a$ and $b$. Note that, $a$ might either coincide with $c$ or it might be the b-vertex or the opener of an ancestor block $\overline{B}$ of $B$. In any case, vertex $a$ always exists, as the root $r$ is a fork vertex, except for the case in which $b$ itself is the root. This special case $b=r$ will be considered at the end of the proof. Also note that $a$ is adjacent to both $w_p$ and $w_q$, and we have $\ell(a) \geq \ell(x) \geq \ell(c')$ for all $x \in N_B$.
	
	We claim that $\ell(c) \geq \ell(x) \geq \ell(c')$ for all $x \in N_B$. Namely, if $c$ is a fork vertex, then $c=a$ and the claim trivially follows; while if $c$ is a non-fork vertex, then it is a branch vertex (since it has at least a fork vertex descendant, namely $b$), and hence it has been relabeled so that $\ell(c)=\ell(w_q)$.
	
	We then claim that, for each point set $S_l$ with $k < l \leq j$, there exists no vertex of $N_B$ lying on segment $s^N_l$. Namely, the embedding algorithm places a vertex $z$ on the $s^N_l$ segment only if $z$ is a branch vertex of $T$; however, this implies that there exists at least a child block of $B$ attached to $z$, and hence $z$ is the opener of this block. Thus, $z$ has been relabeled and does not lie on $s^N_l$.
	
	Finally, we consider the placement of $c'$ and of the tree components merged to edge $(b,c')$. If $c'$ is a fork vertex, then $c'$ lies on $p^C_k$, the vertices of $N_{tc}$ adjacent to $w_p$ are on $s^N_k$, and the other vertices of $N_{tc}$ are either on $s^+_{k}$ or on a segment $s^-_{k'}$, for some $k' > k$, by the algorithm described in \textbf{Part C}. If $c'$ is a non-fork vertex, then $c'$ lies on $s^-_{k}$, together with all the vertices of $N_{tc}$ that have been merged to $(b,c')$. We hence perform a promotion of $c'$ at $S_k$, moving $c'$ to $p^C_k$, the vertices of $N_{tc}$ adjacent to $w_p$ to $s^N_k$, and the other vertices of $N_{tc}$ to $s^+_{k}$. As in the previous case, there was no vertex of $N_B$ placed on $p^C_k$ before promoting $c'$; in this case, however, we have to consider the possibility that vertex $b$ was placed on $p^C_k$. Since $b$ has been removed, $p^C_k$ is again free, but a vertex of $N_B$ might still lie on $s^+_{k}$, namely $b'$. This does not affect the possibility of performing the promotion of $c'$, as we have only to ensure that $b'$ is moved on $s^+_{k}$ far enough from $p^C_k$ so that the other vertices of $N_B$ that are moved to that segment can fit. This is always possible since $s^+_{k}$ contains $\overline{n}$ points, where either $\overline{n} = \sqrt{n}$ or $\overline{n}=n$, and there exist at most $\overline{n}$ vertices in total on $S_k$.
	
	The two claims above, together with the discussion about $c'$, make it possible to apply Property~\ref{property:convex} to prove that the vertices of $N_B$ are in convex position. 
	
	In the following we prove that they appear along this convex region in the correct order. First note that the vertices in $N_{f}\cup N_{nf}$ are in the correct order, by construction. As for the vertices in $N_{tc}$, the algorithm in \textbf{Part C} places each set of vertices belonging to the same tree component $T_b$ between the two vertices of $N_{f}\cup N_{nf}$ incident to the face to which the vertices of $T_b$ have been assigned by the rule ``choice of faces" in \textbf{Part B'}. The only exception concerns the vertices merged to $(b,c')$ that are adjacent to $w_p$, as these vertices are on $s^N_k$; however, this is still consistent with the order in which the vertices of $N_B$ appear along the boundary of $B$.
	
	This concludes the proof of the lemma.
	\qed
\end{proof}

By Lemma~\ref{lemma:correctorder} the block vertices of every block are in convex position. Since every convex point of size $n$ set is universal for $n$-vertex outerplanar graphs~\cite{GritzmannMPP91,comgeo/Bose02}, we can now insert all block edges $E_{BL}$ in $\Gamma''$ without introducing any crossing. The resulting drawing is a planar embedding of $[F,G_h]$ on $S^*$, which proves the following.

\begin{lemma}\label{th:2-outerplanar-dense-sparse}
	Any $2$-outerplanar graph admits a planar straight-line embedding on a point set of size $O(n^{3/2})$. 
\end{lemma}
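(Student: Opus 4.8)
The plan is to assemble Lemma~\ref{th:2-outerplanar-dense-sparse} entirely from the machinery already developed in Sections~\ref{section:triangulatedForests}--\ref{section:contraction}, essentially by orchestrating the three-part reduction (Parts A, B$'$, C) and then invoking the universality of convex point sets for outerplanar graphs. Given an arbitrary $2$-outerplanar graph $[G,{\cal H}]$, I would first make it connected by adding dummy edges, fix a planar drawing $\Gamma$, and treat each face $F=F_h$ of $G$ together with the graph $G_h$ inside it independently, as the construction is local to each cycle-graph pair $[F,G_h]$. For each such pair, the first step is the block-to-star transformation: replace every biconnected block $B$ of $G_h$ by a star centered at a new b-vertex $b$ with one leaf per vertex of $N_B$, identifying leaves that correspond to shared c-vertices. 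This turns $G_h$ into a tree $T$ and hence $[G,{\cal H}]$ into a $2$-outerplanar graph $[G,{\cal H}']$ whose inner level ${\cal H}'$ is a forest, with an induced planar embedding $\Gamma'$ on each $[F,T]$.

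The second step is to run the modified reduction pipeline on $[F,T]$: apply \textbf{Part A} (delete petal and big-face edges, remove tree components, triangulate to $[F,T^\Delta]$) exactly as in Section~\ref{subsection:replacement}; label $[F,T^\Delta]$ with the algorithm of Section~\ref{subsection:Labeling}, then relabel each c-vertex that is a branch vertex as prescribed (so Lemma~\ref{lem:inducedsubtree}'s property survives the relabeling, as argued in the text); define the opener and closer of each b-vertex, handle the root-blocks by the edge-rerouting of $(w_m,b)$ when the opener $c$ of the root-block-bearing b-vertex is a non-fork vertex (this does not touch $\Gamma$, only $\Gamma'$, since $(w_m,b)\notin[F,G_h]$); and finally perform \textbf{Part B$'$}, merging each tree component $T_{1,2}$ to the edge $(b,b_1)$ or $(b,b_2)$ dictated by the rule ``choice of Faces'', producing $[F,T^*]$. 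The crucial payoff is Lemma~\ref{lemma:correctorder}: applying \textbf{Part C} of Lemma~\ref{theorem:2cycleouterplanar} to $[F,T^*]$ yields an embedding $\Gamma^*$ on $S^*$ in which, for every b-vertex $b$ with block $B$, the vertices $N_B$ lie in convex position and in the same cyclic order as along the outer face of $B$ in $\Gamma$. Here one also needs Property~\ref{property:convex} (convexity of the union of the $s^-$, $p^C$, $s^+$ segments over a consecutive range of point sets, with the endpoints optionally using $s^N$), plus the observation that the petal points of $S^*$ along a single segment are themselves in convex position.

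The final step is to re-insert the block edges. Since $N_B$ is in convex position and the prescribed cyclic order matches the outer-boundary order of $B$, and since every $|N_B|$-point convex set is universal for outerplanar graphs~\cite{GritzmannMPP91,comgeo/Bose02}, the block $B$ (which is outerplanar on its outer boundary, as $G_h$ is outerplanar) can be drawn straight-line on $N_B$ without crossings and consistently with the rest of $\Gamma^*$; we insert all of $E_{BL}$ this way, for every block of every $G_h$. We then remove all remaining dummy vertices, dummy edges, and triangulation edges, and add back the edges removed during the root-block rerouting (planarity of this last addition is exactly the visibility statement between $w_1$ and the root-blocks guaranteed inside the proof of Lemma~\ref{lemma:correctorder}) and the dummy edges used to make $[G,{\cal H}]$ connected. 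The result is a planar straight-line embedding of $[G,{\cal H}]$ on $S^*$, and $|S^*|=O(n^{3/2})$ by the count in Section~\ref{subsection:extension}.

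The main obstacle is verifying that \textbf{Part B$'$} and the promotions of closers in Lemma~\ref{lemma:correctorder} really deliver both the convex position \emph{and} the correct cyclic order of $N_B$ simultaneously across all the case distinctions (b-vertex fork vs.\ non-fork; opener fork vs.\ non-fork; $j=k$ vs.\ $j>k$; the root-block case; the special case $b=r$), while leaving enough free points on each segment for the promotions — but this is precisely the content of Lemma~\ref{lemma:correctorder}, which I may assume. The only genuinely new work is the routine bookkeeping that reinserting $E_{BL}$ on the convex regions, together with restoring the rerouted and connectivity edges, does not reintroduce crossings; this follows from the convexity of each $N_B$, the visibility properties of Property~\ref{property:visibility}, and the fact that distinct blocks occupy disjoint (nested or side-by-side) convex regions inherited from $\Gamma'$.
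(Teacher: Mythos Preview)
Your proposal is correct and follows essentially the same approach as the paper: invoke Lemma~\ref{lemma:correctorder} to place the block vertices $N_B$ of every block in convex position with the correct cyclic order, then use the universality of convex point sets for outerplanar graphs~\cite{GritzmannMPP91,comgeo/Bose02} to reinsert the block edges $E_{BL}$ without crossings, yielding an embedding of $[F,G_h]$ (and hence of $[G,{\cal H}]$) on $S^*$ with $|S^*|=O(n^{3/2})$. The paper's own argument is just this two-sentence core; your write-up simply expands the surrounding pipeline (block-to-star transformation, Parts A/B$'$/C, root-block handling, cleanup) that the paper already laid out in the text preceding the lemma.
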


Using the technique from~\cite{AngeliniBKMRS11} we can reduce the size 
of $S^*$ to $O(n (\frac{\log n}{\log\log n})^2)$, but an even better bound can be obtained 
by using the super-pattern sequence $\xi$ from~\cite{DBLP:journals/jgaa/BannisterCDE14}, which allows us
to reduce the size of $S^*$ to $O(n \log n)$ points. Namely, this sequence $\xi$ 
of integers $\xi_j$, with $\sum_{j=1,\dots,n} \xi_j = O(n \log n)$, 
is a majorization of every sequence of integers that sum up to $n$. 
We hence assign the size of each point set $S_j$ based on this sequence,  
instead of using only dense or sparse point sets. We formalize this in the following 
theorem, which states the final result of the paper. 

\begin{theorem}\label{theorem:final}
	There exists a universal point set of size $O(n \log n)$ for the class 
	of $n$-vertex $2$-outerplanar graphs.
\end{theorem}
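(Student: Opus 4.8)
The plan is to build on Lemma~\ref{th:2-outerplanar-dense-sparse}, which already gives a universal point set $S^*$ of size $O(n^{3/2})$, and to improve the bound by replacing the crude dense/sparse dichotomy in the construction of $S^*$ with a more refined allocation of point-set sizes governed by the super-pattern sequence $\xi$ of Bannister et al.~\cite{DBLP:journals/jgaa/BannisterCDE14}. Recall that in the present construction each point $p_j$ on the half-circle $\pi$ carries a point set $S_j$ of size roughly $3\overline{n}$, where $\overline{n}\in\{\sqrt n,\,n\}$ depending on whether $p_j$ was declared dense or sparse; the total is $O(n^{3/2})$ precisely because we over-provision every sparse point to have $\sqrt n$ capacity and every dense point to have $n$ capacity. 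The inefficiency is that the embedding algorithm only ever needs $|S_j|\ge 3\bigl(\omega(v_i)\bigr)+O(1)$ when vertex $v_i$ (with label $i$) is placed on $p_j$, and the multiset of weights $\{\omega(v_i)\}_{i}$ always sums to $n$. So what we really need is a single sequence of capacities that works for \emph{every} possible weight sequence summing to $n$ simultaneously --- i.e., a sequence that majorizes all of them --- and $\xi$ is exactly such a sequence with $\sum_j \xi_j = O(n\log n)$.

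Concretely, I would proceed as follows. First, fix $\xi=(\xi_1,\xi_2,\dots,\xi_N)$ to be the majorizing sequence of~\cite{DBLP:journals/jgaa/BannisterCDE14}, sorted in non-increasing order, with $N=n+\sqrt n$ terms (padding with zeros / constant terms as needed) and $\sum_j \xi_j = O(n\log n)$. Rebuild the point set exactly as in Section~\ref{subsection:Construction} and Section~\ref{subsection:extension}, except that on each segment $s^N_j,s^+_j,s^-_j$ of $S_j$ we evenly distribute $\xi_j-1$ points instead of $\overline{n}-1$, and we add the corresponding petal points; call the resulting set $S^{**}$. Its size is $(9\sum_j \xi_j)+O(N)=O(n\log n)$. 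Second, in Step~b of the embedding (Section~\ref{subsection:Embedding}) we no longer split the vertices of $G$ into two classes; instead, having computed the weight function $\omega$ with $\sum_i \omega(v_i)=n$, we sort the vertices $v_i$ of $G$ in non-increasing order of $\omega(v_i)$ and, using the majorization property, assign them to the points $p_j$ so that $\xi_j \ge 3\,\omega(v_i)+O(1)$ whenever $v_i$ lands on $p_j$. Here we must be slightly careful: the vertices of $G$ must still be placed on $\pi$ \emph{in the cyclic order in which they appear along the outer face of $\Gamma$}, not in weight order, so the assignment is not literally ``$i$-th heaviest vertex to $j$-th point.'' But this is precisely the situation handled in~\cite{AngeliniBKMRS11,DBLP:journals/jgaa/BannisterCDE14}: one reserves the largest-capacity points for the heaviest vertices while respecting a prescribed cyclic order, which is possible because majorization gives enough slack. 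Everything else --- Property~\ref{property:visibility}, Property~\ref{property:polygoncontent}, Property~\ref{property:convex}, the labeling of Section~\ref{subsection:Labeling}, the triangulation and tree-component machinery of Sections~\ref{subsection:replacement}--\ref{subsection:reverting}, and the block-restoration argument of Lemma~\ref{lemma:correctorder} --- goes through verbatim, since none of those arguments used the specific value $\sqrt n$ or $n$ of the capacities; they only used that $|S_j|\ge \omega(v_i)$ for the vertex $v_i$ placed on $p_j$ and that the segments $s^N_j,s^+_j,s^-_j$ host non-fork paths while $p^C_j$ hosts the unique fork vertex, plus the purely geometric visibility/convexity facts.

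The steps in order are therefore: (1) invoke the existence of $\xi$ with $\sum\xi_j=O(n\log n)$ from~\cite{DBLP:journals/jgaa/BannisterCDE14}; (2) re-run the constructions of Sections~\ref{subsection:Construction} and~\ref{subsection:extension} with $\xi_j-1$ points per segment and the matching petal points, obtaining $S^{**}$ of size $O(n\log n)$; (3) verify that Properties~\ref{property:visibility},~\ref{property:polygoncontent} and~\ref{property:convex} still hold, which is immediate because their proofs are independent of the number of points on each segment; (4) replace Step~b with the order-respecting majorization-based placement of the vertices of $G$ onto $\pi$, guaranteeing $\xi_j\ge 3\,\omega(v_i)+O(1)$ for each placed vertex; (5) observe that Steps~a and~c, the labeling of Section~\ref{subsection:Labeling}, Lemma~\ref{lem:inducedsubtree}, and all of Sections~\ref{section:Forests} and~\ref{section:contraction} apply unchanged, so by the chain Lemma~\ref{theorem:2outerplanartriangulated} $\Rightarrow$ Lemma~\ref{theorem:2cycleouterplanar} $\Rightarrow$ Lemma~\ref{th:2-outerplanar-dense-sparse} we obtain a planar straight-line embedding of any $n$-vertex $2$-outerplanar graph on $S^{**}$; (6) conclude $|S^{**}|=O(n\log n)$.

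The main obstacle I anticipate is step~(4): reconciling the \emph{cyclic} constraint (vertices of $G$ must appear on $\pi$ in their outer-face order) with the \emph{capacity} constraint (heavy vertices need high-capacity points). For the dense/sparse version this was trivial because there were only two classes and at most $\sqrt n$ dense vertices, which always fit on the $\sqrt n$ dense points regardless of order. With a finely-graded sequence $\xi$ one needs the stronger statement that a majorizing sequence can be ``threaded'' along any prescribed cyclic order of demands --- this is exactly the lemma Bannister et al.\ use for their universal point set for simple-cycle-level graphs, and it is legitimate to cite it. A secondary point to check is that the additive $O(1)$ slack per point set (the four points $p_j^N,p_j^C,p_j^+,p_j^-$, plus the possible ``promotion'' of a closer to become a fork vertex in Lemma~\ref{lemma:correctorder}) is absorbed by taking $\xi_j$ with a suitable constant factor, which costs only a constant factor in $\sum_j\xi_j$ and hence does not affect the $O(n\log n)$ bound.
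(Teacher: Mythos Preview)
Your proposal is correct and takes essentially the same approach as the paper's own proof: replace the dense/sparse dichotomy by the super-pattern sequence $\xi$ of Bannister et al., put $\xi_j$ points on each segment of $S_j$, and use the majorization property of $\xi$ to assign the vertices of $G$ to points of $\pi$ in their outer-face order while guaranteeing sufficient capacity, with all geometric properties and the rest of the embedding pipeline carrying over verbatim. The paper's version is terser---it simply places each $v_i$ on the first free $p_j$ with $\xi_j \ge \omega(v_i)$ and appeals directly to the super-pattern property---but the content is identical to what you outline.
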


\begin{proof}
	Bannister et al.~\cite{DBLP:journals/jgaa/BannisterCDE14} proved that there exists a sequence $\xi$ of integers $\xi_j$, with $\sum_{j=1,\dots,n} \xi_j = O(n \log n)$, that satisfies the following property. For each finite sequence $\alpha_1,\dots,\alpha_k$ of integers such that $\sum_{i=1,\dots,k} \alpha_i = n$, there exists a subsequence $\beta_1,\dots,\beta_k$ of the first $k$ elements of $\xi$ such that, for each $i=1,\dots,k$, we have $\alpha_i \leq \beta_i$. 
	
	Bannister et al.~\cite{DBLP:journals/jgaa/BannisterCDE14} used this sequence to construct a universal point set of size a $O(n \log n)$ for simply-nested graphs~\cite{AngeliniBKMRS11}. We use the same technique to construct our universal point set $S^*$. Namely, for each $j = 1, \dots, n$, we place $\xi_j$ points on each of segments $s_j^-$, $s_j^+$, and $s_j^N$ of $S_j$, which hence results in a point set of total size $O(n \log n)$. Then, when each vertex $v_i \in G$ has to be placed on a point of the outer half-circle $\pi$ according to its weight $\omega(v_i)$, we place it on the first free point $p_j$ such that $\omega(v_i) \leq \xi_j$. Since the sum of the weights of the vertices of $G$ is equal to $n$, by the property of sequence $\xi$ we have that all the vertices of $G$ can be placed on $S^*$. This concludes the proof of the theorem.
\end{proof}
 
\section{Conclusions}\label{section:conclusions}

We provided a universal point set of size $O(n \log n )$ for $2$-outerplanar graphs. A natural question is whether our techniques can be extended to other meaningful classes of planar graphs, such as $3$-outerplanar graphs. We also find interesting the question about the required area of universal point sets. In fact, while the integer grid is a universal point set for planar graphs with $O(n^2)$ points and $O(n^2)$ area, all the known point sets of smaller size, even for subclasses of planar graphs, require a larger area. We thus ask whether universal point sets of subquadratic size require polynomial or exponential area. 

\bibliographystyle{abbrv}
%

\bibliography{bibliography}

\end{document}